\documentclass[
final
 , nomarks
]{dmtcs-episciences}


\usepackage[utf8]{inputenc}
\usepackage{subfigure}

%

\usepackage[numbers]{natbib}

\usepackage{amsmath,amssymb,amsthm}
\usepackage[usenames,dvipsnames]{color}
\usepackage{tikz}
\usetikzlibrary{arrows,positioning,chains,fit,shapes,calc}
\usetikzlibrary{decorations.pathreplacing}


\definecolor{mygreen}{RGB}{40,140,120}
\definecolor{truegray}{gray}{0.60}
\definecolor{mydgreen}{RGB}{21,204,0}
\definecolor{mauve}{RGB}{51,0,102}
\definecolor{rose}{RGB}{255,0,153}
\definecolor{comms}{gray}{0.55}
\definecolor{brun}{RGB}{204,0,204}
\definecolor{moche}{RGB}{129,125,10}
\definecolor{bgray}{gray}{0.25}
\definecolor{bbgray}{gray}{0.40}

\DeclareMathOperator{\tr}{Tr}

\DeclareMathOperator{\flt}{flt}

\DeclareMathOperator{\fos}{fs}
\DeclareMathOperator{\ufs}{uf}

\newcommand{\FFF}{\mathcal{F}{}}

\newcommand{\R}{\mathbb{R}}

\newtheorem{thm}{Theorem}[section]
\newtheorem{prop}{Proposition}[section]
\newtheorem{coroll}{Corollary}[section]
\newtheorem{lem}{Lemma}[section]

\theoremstyle{definition}

\newtheorem{pbl}{Problem}
\newtheorem{expl}{Example}
\theoremstyle{remark}

\newcommand{\NPclass}{$\mathcal{NP}$}


\author{Jean Cardinal\affiliationmark{1}
  \and Jean-Paul Doignon\affiliationmark{1}
  \and Keno Merckx\affiliationmark{1}}
\title[ ]{On the shelling antimatroids of split graphs}
\affiliation{
  Universit\'{e} libre de Bruxelles, Belgium}
\keywords{antimatroid, split graph, shelling, maximum weight feasible set}
\received{2015-12-21}
\accepted{2017-03-11}
\revised{2017-01-24}
\begin{document}
\publicationdetails{19}{2017}{1}{7}{1349}
\maketitle
\begin{abstract}  Chordal graph shelling antimatroids have received little attention with regard to their combinatorial properties and related optimization problems, as compared to the case of poset shelling antimatroids.  Here we consider a special case of these antimatroids, namely the split graph shelling antimatroids.  We show that the feasible sets of such an antimatroid relate to some poset shelling antimatroids constructed from the graph.  We discuss a few applications, obtaining in particular a simple polynomial-time algorithm to find a maximum weight feasible set.  We also provide a simple description of the circuits and the free sets.
\end{abstract}

\section{Introduction}

The ``split graph shelling antimatroids" are particular instances of ``chordal graph shelling antimatroids". To investigate them, we make use of ``poset antimatroids" (all the terms are explained in the next subsections). Our results shed light on the structure of split graph shelling antimatroids, and yield a polynomial time algorithm to find an optimal feasible set in a split graph shelling antimatroid whose elements are weighted.

\subsection*{Antimatroids}
Antimatroids arise naturally from various kinds of shellings and searches on combinatorial objects, and appear in various contexts in mathematics and computer science. \citet{Dilworth_1940} first examined structures very close to antimatroids in terms of lattice theory. Later, \citet{Edelman_80} and \citet{Jamison_1982} studied the convex aspects of antimatroids, see also~\citet{Edelman85}.  \citet{Korte_Lovasz_Schrader_1991} considered antimatroids as a subclass of greedoids. Today, the concept of antimatroid appears in many fields of mathematics such as formal language theory (\citet{Boyd_Faigle_90}), choice theory (\citet{Koshevoy_1999}), game theory (\citet{Algaba_all_04}) and mathematical psychology (\citet{Falmagne_Doignon_LS}) among others.  We became recently aware of more works on optimization in antimatroids, in particular~\citet{Queyranne15, Queyranne16}. The concept of a convex geometry is dual to the one of an antimatroid.

A set system $(V,\FFF)$, where $V$ is a finite set of elements and $ \FFF\subseteq 2^{V}$, is an \emph{antimatroid} when
\begin{align}
&V\in \FFF, \label{VinF} \tag{AM0}\\
&\forall  F_1, F_2 \in \FFF : F_1\cup F_2 \in \FFF, \label{Ustabl} \tag{AM1} \\
&\forall F \in \FFF\setminus \{\varnothing\}, \, \exists f\in F: F\setminus \{f\} \in F. \label{accec} \tag{AM2}
\end{align}
Condition \eqref{accec} is called the \emph{accessibility property}. The \emph{feasible sets} of the antimatroid $(V,\FFF)$ are the members of $\FFF$. The  \emph{convex sets} are the complements in $V$ of the feasible sets.  We have the following relation (for definitions and proof see~\citet{Edelman_80}): A finite lattice is join-distributive if and only if it is isomorphic to the lattice of feasible sets of some antimatroid.

\subsection*{Shellings}

Antimatroids also relate to special shelling processes.  Given a feasible set $F$ in an antimatroid $(V,\FFF)$, a \textsl{shelling} of $F$ is an enumeration $f_1$, $f_2$, \dots, $f_{|F|}$ of its elements such that $\{f_1$, $f_2$, \dots, $f_k\}$ is feasible for any $k$ with $1 \leq k \leq |F|$.  In view of the accessibility property, any feasible set admits at least one shelling.  There is an axiomatization of antimatroids in terms of shellings (see for example~\citet{Korte_Lovasz_Schrader_1991}).  Many examples of antimatroids arise in a natural way from shelling processes.  The next two subsections present the cases of posets and of chordal graphs. \citet{Eppstein_14} calls \textsl{basic word} of an antimatroid $(V,\FFF)$ any shelling of the whole set $V$.  He uses the notion of basic word to extend the $1/3-2/3$-conjecture to antimatroids.

\subsection*{Poset antimatroids}
Recall that a \emph{poset} $P$ is a pair $(V,\leq)$ formed of a finite set $V$ and a binary relation $\leq$ over $V$ which is reflexive, antisymmetric, and transitive. For a poset  $(V,\leq)$ a \emph{filter} $F$ is a subset of $V$ such that for all elements $a$ in $F$ and $b$ in $V$, if $a\leq b$, then $b$ is also in $F$. The filters are also known as \emph{upper ideals},  \emph{upset} or \emph{ending sets}.  We denote the family of all filters of $P$ as $\flt(V,\leq)$. 

One particular class of antimatroids comes from shelling processes over posets by removing successively the maximum elements. Let $(V,\leq)$ be a poset, then $(V, \flt(V,\leq))$ is a \emph{poset (shelling) antimatroid}. Thus the feasible elements are the filters. The class of poset antimatroids is often considered as one of the most basic, because it arises in many different contexts. Poset antimatroids are the only antimatroids closed under intersection (\citet{Korte_Lovasz_Schrader_1991}). There exist several other characterizations for this class of antimatroids. \citet{Nakamura2002, Nakamura_2003} obtains a characterization of poset antimatroids by single-element extensions and by excluded minors. Recently, \citet{Kempner2013} introduce the poly-dimension of an antimatroid, and prove that every antimatroid of poly-dimension 2 is a poset antimatroid. They establish both graph and geometric characterizations of such antimatroids. Basic words in a poset antimatroid coincide with the linear extensions of the poset (which explains their relationship to the $1/3-2/3$-conjecture).  We also note the Representation Theorem due to~\citet{birkhoff1937}: When ordered by inclusion, the feasible sets of a poset antimatroid form a distributive lattice. Conversely, any distributive lattice is isomorphic to some poset antimatroid.

From the optimization point of view, poset antimatroids with weighted elements are easy to study. First, \citet{Picard_1976} makes a direct connection between finding a maximum (or minimum) weight feasible set in a poset antimatroid and some maximum flow problems. Second, \citet{Stanley_86} provides a complete linear description of the convex hull of the characteristic vectors of the feasible sets.

\subsection*{Chordal graph shelling antimatroids}
Another particular class of antimatroids comes from shelling processes over \emph{chordal graphs}, \textit{i.e.} graphs in which every induced cycle in the graph has at most three vertices.  For a background on chordal graphs, see \citet{Golumbic2004}. Any chordal graph has at least one \emph{simplicial} vertex, \textit{i.e.} a vertex such that its neighbors induce a clique.  For any chordal graph $G=(V,E)$, we define an antimatroid $(V,\FFF)$ in which a subset $F$ of $V$ is feasible if and only if there is some ordering $O=(f_1,\ldots, f_{|F|})$ of the elements of $F$ such that for all $j$ between $1$ and $|F|$, $f_j$ is simplicial in $G\setminus\{f_1,\ldots, f_{j-1}\}$. The antimatroid resulting from this construction is called a \emph{chordal graph (vertex) shelling antimatroid}. The ordering $O$ is called a \emph{simplicial shelling} (or sometimes \emph{perfect elimination ordering}) of $F$. If applied to some arbitrary graph, the construction we just described of feasible sets in terms of simplicial vertices gives an antimatroid exactly if the graph is chordal (any graph which has a perfect elimination ordering of its whole set of vertices is chordal), see~\citet{Farber_87} for more details.

We recall that a \emph{split graph} is a graph whose set of vertices can be partitioned into a clique and an independent set (the empty set is both independent and a clique). Here we assume that for every split graph, the partition is given and we will denote by $K$ and $I$ the clique and the independent set, respectively. Split graphs are chordal graphs, and they are the only chordal graphs to be co-chordal (\textit{i.e.} the complement of the graph is also chordal), see \citet{Golumbic2004}. Here we consider the special case of chordal graph shelling antimatroids where the graph is a split graph. These antimatroids will be called \emph{split graph (vertex) shelling antimatroids}. Split graphs are relatively well known and have a wide range of theoretical use, see for instance~\citet{Merris2003, Golumbic2004, Cheng16}.

\subsection*{Structure of the paper}
The aim of this work is to provide a better understanding of the split graph shelling antimatroids. In Section~2, we establish a new characterization of the feasible sets of split graph shelling antimatroids and discuss the connection with the poset antimatroids. In Section~3 we prove a hardness result about optimization problems on antimatroids and develop a polynomial time algorithm to find a maximum (or minimum) weight feasible set in split graph shelling antimatroids. Finally in Section~4 we use the previous results to list all the circuits  and free sets of the split graph shelling antimatroids. This work is a first step to a better understanding of a more general class: the chordal graph shelling antimatroids. 

\subsection*{Notation}
Let $V$ denote a set of elements. For $S\subseteq V$, the \emph{complement} of $S$ is $S^{\complement}=V\setminus S$. The set of all subsets of $V$ is denoted as $2^{V}$ and the symbol $\uplus$ is used for the disjoint union of two sets. Let $G=(V,E)$ be a simple graph; we write $u \sim v$ as a shortcut for $\{u,v\}\in E$, and $u \nsim v$ for $\{u,v\}\notin E$. For $V'\subseteq V$ we denote by $N(V')$ the set of vertices \emph{adjacent} to $V'$, \textit{i.e.} the vertices $w$ in $V\setminus V'$ such that $w\sim v$ for some $v$ in $V'$. We write $N(v)$ for $N(\{v\})$. We call a vertex \emph{isolated} if $N(v)=\varnothing$.

\section{The split graph shelling antimatroids}

\subsection*{Characterization of the feasible sets}
Here is a useful characterization of the feasible sets in a split graph shelling antimatroid. Example~\ref{ex:NF} provides an illustration.

\begin{prop}\label{Nfclique}
Let $G=(K\cup I, E)$ be a split graph and $(V,\FFF)$ be the split graph vertex shelling antimatroid defined on $G$. Then a subset $F$ of vertices is feasible for the antimatroid if and only if $N(F)$ induces a clique.
\end{prop}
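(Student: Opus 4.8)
The plan is to prove both directions by reasoning about the structure of a simplicial shelling in a split graph. Throughout, write $V = K \uplus I$ with $K$ a clique and $I$ an independent set. The key observation is that in any induced subgraph $G \setminus S$ of a split graph, the vertex set still splits as $(K \setminus S) \uplus (I \setminus S)$, so a vertex $v$ is simplicial in $G \setminus S$ precisely when $N_{G \setminus S}(v)$ is a clique; and since the only possible non-edges lie between $I$-vertices, this amounts to saying $v$ has at most one neighbour in $I \setminus S$ if $v \in K \setminus S$, while every $v \in I \setminus S$ is automatically simplicial exactly when its neighbourhood (a subset of $K$) together with... well, $N(v) \subseteq K$ is always a clique, so \emph{every vertex of $I$ is simplicial in any induced subgraph}. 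This last fact will do most of the work.

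For the ``only if'' direction, suppose $F$ is feasible with simplicial shelling $(f_1, \dots, f_{|F|})$. I want to show $N(F)$ induces a clique. The only way $N(F)$ could fail to be a clique is if it contains two non-adjacent vertices, which (as non-edges only occur inside $I$) means two vertices $a, b \in I \cap N(F)$ with $a \nsim b$. Pick $v_a \in F$ with $v_a \sim a$ and $v_b \in F$ with $v_b \sim b$; since $a, b \notin F$ (they are in $N(F)$) and $a,b \in I$ while $a \nsim b$, the witnesses $v_a, v_b$ must lie in $K$ (a vertex of $I$ adjacent to $a \in I$ would contradict independence), and in fact $v_a = v_b$ is impossible only if... actually $v_a$ could equal $v_b$, a single clique vertex adjacent to both $a$ and $b$. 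Either way, consider the first index $j$ at which $f_j \in \{v_a, v_b\}$ gets removed; at that moment $f_j$ is simplicial in $G \setminus \{f_1, \dots, f_{j-1}\}$, yet $a$ and $b$ are still present (they are never in $F$, hence never removed) and both adjacent to $f_j$, so $N_{G \setminus \{f_1,\dots,f_{j-1}\}}(f_j)$ contains the non-edge $\{a,b\}$, contradicting simpliciality. Hence $N(F)$ is a clique.

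For the ``if'' direction, assume $N(F)$ induces a clique; I must exhibit a simplicial shelling of $F$. The strategy is: first remove the vertices of $F \cap I$ in any order, then remove the vertices of $F \cap K$ in any order. Every vertex of $I$ is simplicial in every induced subgraph (its neighbourhood lies in the clique $K$), so the first phase is legitimate and leaves the graph $G' = G \setminus (F \cap I)$. In $G'$ we must now peel off $F \cap K$. When we remove some $f \in F \cap K$, its current neighbourhood consists of (i) remaining clique vertices — these form a clique — and (ii) remaining $I$-vertices adjacent to $f$; such an $I$-vertex is not in $F$ (we removed $F \cap I$ already), so it lies in $N(F)$, and since also the remaining clique neighbours of $f$ lie in $K \cap N(F)$ (as they are adjacent to $f \in F$), the entire current neighbourhood of $f$ is contained in $N(F) \cup (K \cap \text{stuff})$... more precisely, every current neighbour of $f$ lies in $N(F)$ or is a clique vertex, and any two clique vertices are adjacent, any two vertices of $N(F)$ are adjacent by hypothesis, and a clique vertex is adjacent to anything in $K$ while an $N(F)$-vertex adjacent to $f$... the cleanest phrasing: the current neighbourhood of $f$ is a subset of $K \cup N(F)$, and one checks $K \cup N(F)$ induces a clique — indeed $K$ is a clique, $N(F)$ is a clique by hypothesis, and any $x \in K$, $y \in N(F) \setminus K$ with $y$ adjacent to... hmm, this cross-adjacency is \emph{not} automatic. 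I will instead argue directly: a current neighbour of $f$ that lies in $I$ is in $N(F)$; two such are adjacent since $N(F)$ is a clique; a neighbour in $K$ is adjacent to all other current clique-neighbours and, being itself in $N(F)$ (adjacent to $f \in F$), adjacent to every current $I$-neighbour too. So $f$ is simplicial at the time of removal, completing the shelling.

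The main obstacle is the ``if'' direction and specifically the bookkeeping in the second phase: making sure that when a clique vertex $f \in F$ is deleted, its surviving neighbours in $I$ are genuinely in $N(F)$ (this uses that we deleted $F \cap I$ first, so no surviving $I$-vertex belongs to $F$) and that surviving clique neighbours, being adjacent to $f$, also lie in $N(F)$ — so all cross-adjacencies among them are covered by the hypothesis that $N(F)$ is a clique. Getting the order of the two phases right (independent set before clique) is what makes this work; the reverse order would fail.
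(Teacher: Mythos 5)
Both directions of your argument contain a genuine gap, and the one in the ``if'' direction actually breaks the construction rather than just the justification.

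In the ``only if'' direction you pick arbitrary witnesses $v_a \sim a$ and $v_b \sim b$ in $F \cap K$ and assert that the first of them to be shelled, $f_j$, has \emph{both} $a$ and $b$ in its neighbourhood. That holds only when $v_a = v_b$; if $v_a \neq v_b$ and $f_j = v_a$, nothing forces $f_j \sim b$, so the non-edge $\{a,b\}$ need not sit inside $N(f_j)$. (Take $K=\{k_1,k_2\}$, $I=\{a,b\}$ with the only cross-edges $k_1 \sim a$ and $k_2 \sim b$, and $F=\{k_1,k_2\}$: when $k_1$ is shelled first, the offending non-edge in its neighbourhood is $\{k_2,a\}$, not $\{a,b\}$.) The argument is repairable --- either $a \nsim v_b$, and the pair $\{a,v_b\} \subseteq N(v_a)$ already contradicts simpliciality of $v_a$, or $a \sim v_b$, and then $\{a,b\} \subseteq N(v_b)$ contradicts simpliciality of $v_b$ later --- but you did not make this case distinction. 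The paper sidesteps it by taking $f_j$ to be the \emph{first shelling element adjacent to} $v_1$, deducing $f_j \nsim v_2$ from simpliciality, doing the symmetric thing for $v_2$, and then finding the non-edge inside the neighbourhood of whichever of the two clique vertices is shelled first.

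The ``if'' direction is the more serious problem: shelling $F \cap K$ ``in any order'' after $F \cap I$ does not in general give a simplicial shelling. Your justification rests on the claim that every surviving $K$-neighbour of $f$, ``being adjacent to $f \in F$, lies in $N(F)$'' --- but a neighbour that itself belongs to $F \cap K$ and has not yet been removed is \emph{not} in $N(F)$ (which by definition avoids $F$), and it need not be adjacent to the surviving $I$-neighbour of $f$. Concretely: $K=\{k_1,k_2\}$, $I=\{i\}$, edges $k_1 k_2$ and $k_1 i$, and $F=\{k_1,k_2\}$. Then $N(F)=\{i\}$ is a clique, so $F$ is feasible, but the order $(k_1,k_2)$ fails because the neighbourhood of $k_1$ at that moment is $\{k_2,i\}$ with $k_2 \nsim i$; only $(k_2,k_1)$ works. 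This is exactly why the paper splits $F \cap K$ into $V_2=(F\cap K)\setminus N(I\setminus F)$ and $V_3=(F\cap K)\cap N(I\setminus F)$ and shells $V_2$ \emph{before} $V_3$: every vertex of $V_3$ is adjacent to the unique vertex $i$ of $N(F)\cap I$, so once $V_1$ and $V_2$ are gone the current neighbourhood of a $V_3$-vertex consists of $i$ together with clique vertices that are all adjacent to $i$, hence is a clique. Your two-phase idea is sound as far as it goes (the $I$-first phase is fine), but a third phase, or equivalently a constraint on the order within $F \cap K$, is indispensable.
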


\begin{proof}
For the necessary condition (Fig.~\ref{NE_Nfclique}), suppose we have a simplicial shelling $O=(f_1,\ldots, f_{|F|})$ of a feasible set $F$ such that $N(F)$ does not induce a clique in $G$. Then, for some vertices $v_1$ and $v_2$ in $N(F)$ we have $v_1 \nsim v_2$. Hence $\{v_1,v_2\} \not\subseteq K$, since $K$ is a clique. Assume without loss of generality that $v_1 \in I$. Let $f_j$ be the first element in $O$ such that $f_j \sim v_1$. As $v_1 \in I$, by definition of a split graph, $f_j\in K$ and $f_j$ is adjacent to all other vertices of $K$. Then $f_j$ is not adjacent to $v_2$ because $f_j$ must be simplicial in $G\setminus\{f_1,\ldots, f_{j-1}\}$, so $v_2$ must be in $I$. Now let $f_t$ be the first element of $O$ such that $f_t \sim v_2$ (notice $j\neq t$). Since $v_2 \in I$, by a completely symmetric argument, we have $f_t \in K$ and $f_t\nsim v_1$. Now a contradiction follows because, if $j>t$, the vertex $f_t$ is not simplicial in $G\setminus\{f_1,\ldots, f_{t-1}\}$, and if $t>j$ the vertex $f_j$ is not simplicial in $G\setminus\{f_1,\ldots, f_{j-1}\}$.
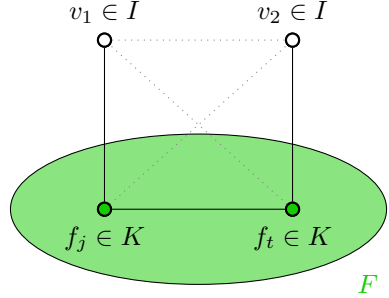
\begin{figure}[h]
\centering
\begin{tikzpicture}[scale=1]
\draw (2.25,-1) node[] {\color{mydgreen}$F$};
\draw[black,fill=mydgreen, fill opacity=0.5] (0,0) ellipse (2.5cm and 1cm);
\draw (-1.25,0) node[circle,draw,fill=gray!60,thick,inner sep=1.75pt,label=below :{$f_j \in K$}]  (t1) {};
\draw (1.25,0) node[circle,draw,fill=gray!60,thick,inner sep=1.75pt,label=below :{$f_t \in K$}]   (t2) {};
\draw (-1.25,2.25) node[circle,draw,fill=gray!60,thick,inner sep=1.75pt,label=above :{$v_1 \in I$}]  (v1) {};
\draw (1.25,2.25) node[circle,draw,fill=gray!60,thick,inner sep=1.75pt,label=above :{$v_2 \in I$}]  (v2) {};
\draw (v1)--(t1) (v2)--(t2);
\draw (t2)--(t1);
\draw node[circle,thick, fill=mydgreen,draw,inner sep=1.75pt]  at (t1) {};
\draw node[circle,thick, fill=white,draw,inner sep=1.75pt]  at (v1) {};
\draw node[circle,thick, fill=mydgreen,draw,inner sep=1.75pt]  at (t2) {};
\draw node[circle,thick, fill=white,draw,inner sep=1.75pt]  at (v2) {};
\draw[gray,dotted] (t1)--(v2);
\draw[gray,dotted] (t2)--(v1);
\draw[gray,dotted] (v1)--(v2);
\end{tikzpicture}
  \caption{Illustration of the proof of necessary condition for Proposition~\ref{Nfclique}.}
  \label{NE_Nfclique}
\end{figure}

Reciprocally, suppose that we have a set of vertices $F$ such that $N(F)$ induces a clique (Fig.~\ref{SU_Nfclique}). We will build a simplicial shelling $O$ on $F$ with the help of the following three set partition of $F$:
\begin{align*}
V_1&=F\cap I,\\
V_2&=(F\cap K) \setminus N(I\setminus F),\\
V_3&=(F\cap K) \cap N(I\setminus F).
\end{align*}
We arbitrarily order the elements in each of the sets $V_1$, $V_2$, $V_3$ and concatenate the orderings in this order to obtain the sequence $O$. By the definition of a split graph, it is obvious that the elements of $O$ in $V_1 \cup V_2$ fulfill the condition of a simplicial shelling. If $V_3 =\varnothing$, we are done. Otherwise, $N(V_3)\setminus F$ is a clique and so it has exactly one element $i$ in $I$, because $N(F)$ induces a clique and thus all elements of $V_3$ are adjacent to this single element of $I\setminus F$. Therefore the elements of $O$ in $V_3$ fulfill the conditions of the simplicial shelling. 
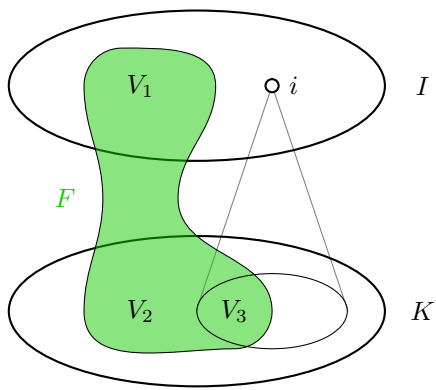
\begin{figure}[h]
\centering
 \begin{tikzpicture} 
\draw (3,0) node[] {$K$};
\draw (3,3) node[] {$I$};
 \draw[black!50] (-0.01,0)--(1,3)--(2.01,0);
  \draw[black, thick] (0,3) ellipse (2.5cm and 1cm);
  \draw[black, thick] (0,0) ellipse (2.5cm and 1cm);
  \begin{scope}[fill opacity=0.5]
    \filldraw[fill=mydgreen] ($(-1,3.5)$)
    to[out=180,in=90] ($(-1.5,3)$)
    to[out=-90,in=90] ($(-1.25,1.5)$)
    to[out=-90,in=90] ($(-1.5,0)$)
    to[out=-90,in=180] ($(0.5,-0.5)$)
    to[out=0,in=-90] ($(1,0)$)
    to[out=90,in=-90] ($(-0.25,1.5)$)
    to[out=90,in=-90] ($(0.25,3)$)
    to[out=90,in=0] ($(-1,3.5)$);
    \end{scope}
    \draw[black] (1,0) ellipse (1cm and 0.5cm);
    \draw node[circle,thick, fill=white,draw,inner sep=1.75pt,label=right :{$i$}]  at (1,3) {};
    \draw (-1.75,1.5) node[] {{\color{mydgreen}$F$}};
    \draw (-0.75,3) node[] {$V_1$};
    \draw (-0.75,0) node[] {$V_2$};
    \draw (0.5,0) node[] {$V_3$};
\end{tikzpicture}
  \caption{Illustration of the proof of the sufficient condition for Proposition~\ref{Nfclique}.}
  \label{SU_Nfclique}
\end{figure}
\end{proof}

\begin{expl}\label{ex:NF}
Figure~\ref{fig:expNF} below  shows two split graphs on which we build a split graph shelling antimatroid. The set $F$ on the left (Fig.~\ref{fig:expNF1}) is a feasible set and we see that $N(F)$ defines a clique. On the right (Fig.~\ref{fig:expNF2}), we have a clique $C$ and a possible simplicial shelling is proposed for a set of vertices such that its neighborhood is $C$. 
\end{expl}
 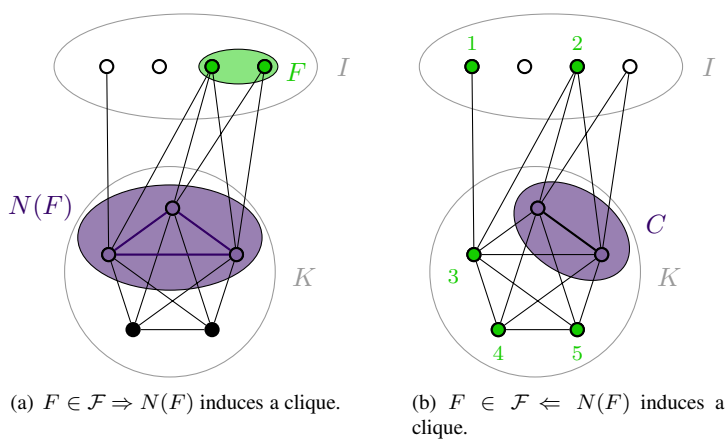
\begin{figure}[h]
        \centering
       \subfigure[$F\in\FFF$ $\Rightarrow N(F)$ induces a clique.]{ \label{fig:expNF1}
                \begin{tikzpicture}[scale=0.7]
\draw (3.25,1) node[] {{\color{truegray}$K$}};
\draw (4,5) node[] {{\color{truegray}$I$}};
\draw[black, fill=mydgreen, fill opacity=0.5] (2,5) ellipse (0.75cm and 0.33cm);
\draw[black,fill=mauve, fill opacity=0.5] (0.7,1.75) ellipse (1.75cm and 1cm);
\draw (3.1,4.9) node[] {{\color{mydgreen}$F$}};
\draw (-1.75,2.33) node[] {{\color{mauve}$N(F)$}};
\draw (0,0) node[circle,draw,thick,fill=black,black,inner sep=1.75pt]  (0) {};
\draw (0)
        ++ (0:1.5cm) node[circle,draw,thick,fill=black,black,inner sep=1.75pt] (d)  {}
        ++ (72:1.5cm) node[circle,draw,thick,fill=black,black,inner sep=1.75pt] (c)  {}
        ++ (144:1.5cm) node[circle,draw,thick,fill=black,black,inner sep=1.75pt] (b)  {}
        ++ (216:1.5cm) node[circle,draw,thick,fill=black,black,inner sep=1.75pt] (a)  {};
 \draw[truegray] (0.7,1.1) ellipse (2cm and 2cm);
 \draw (-0.5,5) node[circle,draw,fill=white,thick,inner sep=1.75pt] (1)  {}
 (0.5,5) node[circle,draw,fill=white,thick,inner sep=1.75pt] (2)  {}
 (1.5,5) node[circle,draw,fill=white,thick,inner sep=1.75pt] (3)  {}
 (2.5,5) node[circle,draw,fill=white,thick,inner sep=1.75pt] (4)  {};
  \draw[truegray] (1,5) ellipse (2.5cm and 1cm);with
 \draw node[circle,thick, fill=mydgreen,draw,inner sep=1.75pt]  at (4) {};
 \draw node[circle,thick,fill=mydgreen,draw,inner sep=1.75pt]  at (3) {};
 \draw node[circle,thick,fill=mauve!60,draw,inner sep=1.75pt]  at (a) {};
 \draw node[circle,thick,fill=mauve!60,draw,inner sep=1.75pt]  at (b) {};
 \draw node[circle,thick,fill=mauve!60,draw,inner sep=1.75pt]  at (c) {};
\draw (a)--(b)--(c)--(d)--(0)--(a)--(d)--(b)--(0)--(c)--(a);  
\draw (c)--(4)--(b)--(3) (a)--(1) (c)--(3)--(a);
\draw[mauve,thick] (b)--(c)--(a)--(b);
\end{tikzpicture}}\qquad
\subfigure[$F\in\FFF$ $\Leftarrow N(F)$ induces a clique.]{ \label{fig:expNF2}
                \begin{tikzpicture}[scale=0.7]
\draw (3.25,1) node[] {{\color{truegray}$K$}};
\draw (4,5) node[] {{\color{truegray}$I$}};
\draw (3,2) node[] {{\color{mauve}$C$}};
\draw[rotate=-33,black,fill=mauve, fill opacity=0.5] (0.15,2.33) ellipse (1.2cm and 0.8cm);
\draw (0,0) node[circle,draw,thick,fill=black,black,inner sep=1.75pt]  (0) {};
\draw (0)
        ++ (0:1.5cm) node[circle,draw,thick,fill=black,black,inner sep=1.75pt] (d)  {}
        ++ (72:1.5cm) node[circle,draw,thick,fill=black,black,inner sep=1.75pt] (c)  {}
        ++ (144:1.5cm) node[circle,draw,thick,fill=black,black,inner sep=1.75pt] (b)  {}
        ++ (216:1.5cm) node[circle,draw,thick,fill=black,black,inner sep=1.75pt] (a)  {};
 \draw[truegray] (0.7,1.1) ellipse (2cm and 2cm);
\draw (-0.5,5) node[circle,draw,fill=white,thick,inner sep=1.75pt] (1)  {}
 (0.5,5) node[circle,draw,fill=white,thick,inner sep=1.75pt] (2)  {}
 (1.5,5) node[circle,draw,fill=white,thick,inner sep=1.75pt] (3)  {}
 (2.5,5) node[circle,draw,fill=white,thick,inner sep=1.75pt] (4)  {};
  \draw[truegray] (1,5) ellipse (2.5cm and 1cm);
 \draw node[circle,thick, fill=mauve!60,draw,inner sep=1.75pt]  at (c) {};
 \draw node[circle,thick, fill=mauve!60,draw,inner sep=1.75pt]  at (b) {};
 \draw node[circle,thick, fill=mydgreen,draw,inner sep=1.75pt,label=below:{\footnotesize \color{mydgreen} $4$}]  at (0) {};
 \draw node[circle,thick,fill=mydgreen,draw,inner sep=1.75pt,label=above :{\footnotesize \color{mydgreen} $1$}]  at (1) {};
 \draw node[circle,thick, fill=mydgreen,draw,inner sep=1.75pt,label=above :{\footnotesize \color{mydgreen} $2$}] at (3) {};
 \draw node[circle,thick,fill=mydgreen,draw,inner sep=1.75pt,label=below left :{\footnotesize \color{mydgreen} $3$}]  at (a) {};
 \draw node[circle,thick,fill=mydgreen,draw,inner sep=1.75pt,label=below :{\footnotesize \color{mydgreen} $5$}]  at (d) {};
\draw (a)--(b)--(c)--(d)--(0)--(a)--(d)--(b)--(0)--(c)--(a);weight 
\draw (c)--(4)--(b)--(3) (a)--(1) (c)--(3)--(a);
\draw[black,thick]  (c)--(b);
\end{tikzpicture}}
        \caption{Examples for Proposition~\ref{Nfclique}.}\label{fig:expNF}
\end{figure}

We recall that a \emph{chordless path} in a graph is a path for which no two vertices are connected by an edge that is not in the path. The chordless paths are also called \emph{induced paths}. In a graph $(V,E)$, a subset $C$ of $V$ is \textsl{monophonically convex} (\textsl{m-convex}) when $C$ contains all the vertices of all chordless paths  joining any two vertices of $C$.

\begin{prop}\label{prop_1}
Let $(V,E)$ be a graph and $F$ be a subset of $V$. If $N(F)$ is a clique, then $V \setminus F$ is m-convex.
\end{prop}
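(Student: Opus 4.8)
The plan is to argue by contradiction, extracting a chord of a supposedly chordless path. Suppose $N(F)$ induces a clique but $V\setminus F$ fails to be m-convex: then there are vertices $u,v\in V\setminus F$ and a chordless path $P=(u=x_0,x_1,\dots,x_\ell=v)$ some of whose vertices lie in $F$. First I would observe that, since the two endpoints $x_0=u$ and $x_\ell=v$ lie outside $F$, the set $\{\,t : x_t\in F\,\}$ is a nonempty subset of $\{1,\dots,\ell-1\}$; let $a$ be its smallest element and $b$ its largest, so $1\le a\le b\le \ell-1$.

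Next I would examine the two vertices $x_{a-1}$ and $x_{b+1}$ flanking the portion of $P$ that enters $F$. By the minimality of $a$ and the maximality of $b$, neither $x_{a-1}$ nor $x_{b+1}$ lies in $F$; and since $x_{a-1}\sim x_a\in F$ and $x_{b+1}\sim x_b\in F$, both $x_{a-1}$ and $x_{b+1}$ belong to $N(F)$. Their positions along $P$ differ by $(b+1)-(a-1)=b-a+2\ge 2$, so $x_{a-1}$ and $x_{b+1}$ are two distinct, non-consecutive vertices of $P$. As $N(F)$ is a clique, $x_{a-1}\sim x_{b+1}$, which is a chord of $P$, contradicting that $P$ was chordless. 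Hence $V\setminus F$ is m-convex.

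I expect the only delicate point to be the bookkeeping that certifies the produced edge as a genuine chord: one must guarantee that the flanking vertices $x_{a-1}$ and $x_{b+1}$ are distinct and non-adjacent on the path, and this is exactly where the hypothesis $u,v\notin F$ (forcing $1\le a\le b\le\ell-1$) is used; the rest — membership of these vertices in $N(F)$ and the appeal to the clique hypothesis — is immediate from the definitions. Note that no property specific to split graphs is needed here; combined with Proposition~\ref{Nfclique}, the statement says in particular that every convex set of a split graph shelling antimatroid is m-convex.
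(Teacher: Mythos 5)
Your proof is correct and follows exactly the same route as the paper's: take a chordless path between two vertices of $V\setminus F$ that meets $F$, pick the first and last indices landing in $F$, and observe that the two flanking vertices lie in $N(F)$ and hence are joined by a chord. Your version is in fact slightly more careful than the paper's, since you explicitly verify that the flanking vertices are distinct and non-consecutive so that the edge is a genuine chord.
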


\begin{proof}
Assume $N(F)$ is a clique.  Proceeding by contradiction, we take two vertices $v$, $w$ in $V \setminus F$ for which there exists a chordless path $v=u_0, u_1, \dots, u_k=w$ having at least one vertex in $F$. Now we select $i$ minimal and $j$ maximal in $\{1,\ldots,k-1\}$ such that $u_i$ and $u_j$ are in $F$.  Then necessarily $u_{i-1}$ and $u_{j+1}$ are adjacent, so the path has chord, contradiction.
\end{proof}

The converse of the implication in Proposition~\ref{prop_1} does not hold even if the graph is connected. Even more:  $V\setminus F$ being m-convex does not imply that the graph $N$ induced on $N(F)$ is a parallel sum of cliques (in other words, that $N$ is the complement of a multipartite graph). Figure~\ref{expJPD} below shows a counter-example based on a 2-connected, chordal graph.

\begin{figure}[h]
\centering
\begin{tikzpicture}[scale=1]
\tikzstyle{vertex}=[circle,draw,fill=white,thick,inner sep=1.75pt]
\node[vertex,fill=mydgreen] (a) at (-1,0) {};
\node[vertex,fill=mydgreen] (b) at (0,0) {};
\node[vertex,fill=mydgreen] (c) at (1,0) {};
\node[vertex] (d) at (-1,1) {};
\node[vertex] (e) at (0,1) {};
\node[vertex] (f) at (1,1) {};
\node[vertex] (g) at (-0.5,2) {};
\node[vertex] (h) at (0.5,2) {};
\node[vertex] (i) at (1.5,2) {};
\draw (d) -- (a) -- (g) -- (d) -- (e) -- (b) -- (h) -- (e) -- (f) -- (c) -- (i) -- (f);
\draw (g) -- (h) -- (i);
\draw (g) -- (e) -- (i);
\draw (2.7,0) node[] {${\color{mydgreen}F}$};
\draw[black,fill= mydgreen, fill opacity=0.5] (0,0) ellipse (2cm and 0.75cm);
\end{tikzpicture}
      \caption{Counter-example for the converse implication in Proposition~\ref{prop_1}.}\
       \label{expJPD}
\end{figure}
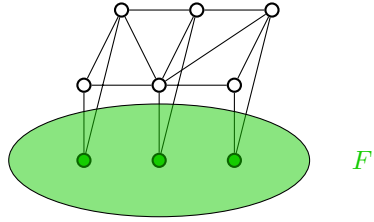

For a split graph $(V,E)$, the converse of the implication in Proposition~\ref{prop_1} also holds (this follows from Proposition~\ref{Nfclique} and Section~3 of~\citet{Farber_87}).

\begin{coroll} \label{Lem_i1i2}
Let $G=(K\cup I, E)$ be a split graph and $(V,\FFF)$ be the split graph vertex shelling antimatroid defined on $G$. For all feasible sets $F$, there is at most one $i \in I \setminus F$ such that there is $k \in K \cap F$ with $k\sim i$.
\end{coroll}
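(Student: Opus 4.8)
The plan is to derive this statement as an immediate consequence of Proposition~\ref{Nfclique}. Since $F$ is feasible for the split graph vertex shelling antimatroid, that proposition tells us that $N(F)$ induces a clique in $G$. I would then argue by contradiction.

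Suppose there were two \emph{distinct} vertices $i_1, i_2 \in I\setminus F$, each having a neighbour inside $K\cap F$; say $i_1 \sim k_1$ and $i_2 \sim k_2$ with $k_1, k_2 \in K\cap F$. Because $k_1, k_2$ belong to $F$ while $i_1, i_2$ do not, both $i_1$ and $i_2$ lie in $N(F)$ by definition of the neighbourhood of a set. As $N(F)$ induces a clique, we would get $i_1 \sim i_2$. But $i_1$ and $i_2$ are both vertices of the independent set $I$, so $i_1 \nsim i_2$ — a contradiction. Hence there is at most one $i\in I\setminus F$ with a neighbour in $K\cap F$, which is what we wanted.

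There is no genuine obstacle here; the corollary is essentially a one-line rewording of Proposition~\ref{Nfclique}. The only point worth flagging in the write-up is that the hypothesis places the common neighbour $k$ \emph{inside} $F$ (and not merely in $K$); this is precisely what forces $i\in N(F)$ and lets the clique property of $N(F)$ rule out a second such vertex of $I$.
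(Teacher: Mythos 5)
Your argument is correct and is exactly the one the paper intends: the paper's proof is the one-liner ``this comes directly from Proposition~\ref{Nfclique} and the fact that $I$ is an independent set,'' and your contradiction via $i_1,i_2\in N(F)$ forcing $i_1\sim i_2$ is just that argument spelled out. Nothing to add.
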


\begin{proof} 
This comes directly from Proposition~\ref{Nfclique} and the fact that the set $I$ is an independent set.
\end{proof}

\begin{coroll} \label{defGraph1}
Let $G=(K\cup I, E)$ be a split graph and $(V,\FFF)$ be the split graph vertex shelling antimatroid defined on $G$. Let $u$ and $v$ be distinct elements in $V$. Then $V\setminus \{u,v\}\in \FFF$ if and only if $u\sim v$, or at least one of the two vertices is isolated in $G$.
\end{coroll}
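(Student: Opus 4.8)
The plan is to invoke Proposition~\ref{Nfclique} with $F=V\setminus\{u,v\}$, which reduces the claim to understanding when $N\bigl(V\setminus\{u,v\}\bigr)$ induces a clique. The crucial (and trivial) observation is that $N\bigl(V\setminus\{u,v\}\bigr)\subseteq\{u,v\}$: every vertex of $V$ other than $u$ and $v$ already belongs to $V\setminus\{u,v\}$, hence cannot lie in its neighbourhood. So $N\bigl(V\setminus\{u,v\}\bigr)$ is one of $\varnothing$, $\{u\}$, $\{v\}$, $\{u,v\}$, and all of these induce a clique except $\{u,v\}$ when $u\nsim v$.

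For the ``if'' direction I would argue: if $u\sim v$, then every subset of $\{u,v\}$ induces a clique, so $N\bigl(V\setminus\{u,v\}\bigr)$ induces a clique and $V\setminus\{u,v\}\in\FFF$ by Proposition~\ref{Nfclique}; if instead, say, $u$ is isolated, then $u\notin N(V\setminus\{u,v\})$, hence $N\bigl(V\setminus\{u,v\}\bigr)\subseteq\{v\}$, which again induces a clique, and the same conclusion follows. For the ``only if'' direction, suppose $V\setminus\{u,v\}\in\FFF$ while $u\nsim v$ and neither $u$ nor $v$ is isolated. Since $u$ is not isolated and $u\nsim v$, the vertex $u$ has a neighbour in $V\setminus\{u,v\}$, so $u\in N\bigl(V\setminus\{u,v\}\bigr)$; symmetrically $v\in N\bigl(V\setminus\{u,v\}\bigr)$. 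Thus $N\bigl(V\setminus\{u,v\}\bigr)=\{u,v\}$, which does not induce a clique because $u\nsim v$, contradicting Proposition~\ref{Nfclique}.

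There is essentially no obstacle here; the whole content is the containment $N(V\setminus\{u,v\})\subseteq\{u,v\}$. The only point requiring a little care is the bookkeeping of the logical equivalence between ``$u\sim v$ or $u$ isolated or $v$ isolated'' and the negation of its complement; this is smoothed over by noting that isolatedness of $u$ (or of $v$) already forces $u\nsim v$, so the disjunction in the statement is exactly the disjunction of the cases treated above.
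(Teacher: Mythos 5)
Your proof is correct and is exactly the argument the paper intends: the paper simply states that the corollary ``comes directly from Proposition~\ref{Nfclique},'' and your write-up supplies the routine details (the containment $N(V\setminus\{u,v\})\subseteq\{u,v\}$ and the case analysis) of that same reduction. No issues.
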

  
\begin{proof}
This comes directly from Proposition~\ref{Nfclique}.
\end{proof}

Corollary~\ref{defGraph1} helps us to rebuild the original split graph for a given split graph shelling antimatroid, as shown in the following proposition.

\begin{prop}\label{unicA}
Let $(V,\FFF)$ be a split graph shelling antimatroid with $\FFF \neq 2^{V}$, then there is a unique split graph $G$ such that $(V,\FFF)$ is the split graph shelling antimatroid defined on $G$.
\end{prop}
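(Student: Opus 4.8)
The plan is to reconstruct $G$ from the antimatroid $(V,\FFF)$ using Corollary~\ref{defGraph1} to recover the edge set, and then to identify which vertices lie in the clique $K$ and which in the independent set $I$. First I would define a graph $G'=(V,E')$ by declaring $\{u,v\}\in E'$ exactly when $V\setminus\{u,v\}\in\FFF$ and neither $u$ nor $v$ is ``isolated'' in the sense detected by $\FFF$ (a vertex $w$ being isolated should be characterized purely in terms of $\FFF$, e.g.\ via $V\setminus\{w,x\}\in\FFF$ for every $x$). By Corollary~\ref{defGraph1}, if $(V,\FFF)$ does arise from some split graph $G=(K\cup I,E)$, then $E'=E$: the corollary says $V\setminus\{u,v\}\in\FFF$ iff $u\sim v$ or one of $u,v$ is isolated, and an isolated vertex is recognizable from $\FFF$, so removing the ``isolated'' ambiguity pins down the edges. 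The point of the hypothesis $\FFF\neq 2^V$ is to guarantee that $G$ is not edgeless (if $G$ has no edges at all, every vertex is simplicial always, $\FFF=2^V$, and then every partition of $V$ into a clique of size $\le 1$ and an independent set works, killing uniqueness), so that the reconstruction is unambiguous.

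Next, having recovered the edge set $E=E'$, I would argue that the split partition, \emph{up to the unavoidable ambiguity}, is forced by $G$ itself, and then that $(V,\FFF)$ is determined by $G$. Recall that by Proposition~\ref{Nfclique}, the feasible sets of the split graph shelling antimatroid on $G=(K\cup I,E)$ are exactly the sets $F$ with $N(F)$ a clique; this characterization refers only to $G$, not to the chosen partition $(K,I)$. Hence any two split partitions of the same graph $G$ yield the \emph{same} family $\FFF$. Therefore $(V,\FFF)$ determines $G$ (via $E'$), and $G$ determines $\FFF$ (via Proposition~\ref{Nfclique}), which is exactly the claimed uniqueness: there is a unique graph $G$ with the property that $(V,\FFF)$ is its split graph shelling antimatroid. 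One still needs existence, but that is given: $(V,\FFF)$ is assumed to be a split graph shelling antimatroid, so at least one such $G$ exists.

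The remaining gap to fill carefully is the reconstruction of $E$ from $\FFF$, i.e.\ making Corollary~\ref{defGraph1} actually yield a bijection between edges and pairs at the level of $\FFF$. The subtlety is the isolated vertices: if $w$ is isolated in $G$, then $V\setminus\{w,v\}\in\FFF$ for all $v$, so from $\FFF$ alone one cannot tell whether such a pair $\{w,v\}$ was an edge. I would handle this by first detecting the set $Z$ of vertices $w$ with $V\setminus\{w,v\}\in\FFF$ for every $v\neq w$; I claim $Z$ is precisely the set of isolated vertices of $G$. Indeed, if $w$ is isolated this clearly holds; conversely if $w$ is non-isolated with neighbour $u$, pick any non-neighbour $v$ of $w$ — if one exists, Corollary~\ref{defGraph1} gives $V\setminus\{w,v\}\notin\FFF$ unless $v$ is isolated, and one checks the remaining degenerate configurations (e.g.\ $w$ adjacent to everything) separately, using that $G$ is split so its complement is also chordal. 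Then set $E'=\{\{u,v\}: V\setminus\{u,v\}\in\FFF,\ u\notin Z,\ v\notin Z\}$ and invoke Corollary~\ref{defGraph1} to get $E'=E$. I expect this case analysis around isolated and universal vertices to be the main obstacle; everything else is a direct citation of Propositions~\ref{Nfclique} and Corollary~\ref{defGraph1}.
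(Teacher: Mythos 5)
Your overall strategy is the same as the paper's: recover the edge set from the sets of the form $V\setminus\{u,v\}$ via Corollary~\ref{defGraph1}, after first identifying the isolated vertices. But there is a genuine gap exactly at the point you flag as a ``degenerate configuration to check separately'': your claimed characterization of isolated vertices is false, and the configuration you defer cannot be dispatched. The set $Z=\{w: V\setminus\{w,v\}\in\FFF \text{ for all } v\neq w\}$ is \emph{not} the set of isolated vertices of $G$. By Corollary~\ref{defGraph1}, any vertex adjacent to every non-isolated vertex also lies in $Z$, and such vertices exist even when $\FFF\neq 2^V$. Concretely, take $K=\{k\}$, $I=\{i_1,i_2\}$ with $k\sim i_1$ and $k\sim i_2$. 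Here $N(k)=\{i_1,i_2\}$ is not a clique, so $\{k\}\notin\FFF$ and $\FFF\neq 2^V$; yet $V\setminus\{k,i_1\}$ and $V\setminus\{k,i_2\}$ are both feasible, so $k\in Z$. Your rule $E'=\{\{u,v\}: V\setminus\{u,v\}\in\FFF,\ u,v\notin Z\}$ then returns $E'=\varnothing$, whereas $E=\{\{k,i_1\},\{k,i_2\}\}$ --- the reconstruction outputs the wrong graph. Relatedly, your reading of the hypothesis $\FFF\neq 2^V$ as merely excluding the edgeless graph is too weak (the complete graph also gives $\FFF=2^V$); its real role is precisely to resolve the isolated-versus-universal ambiguity.

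The missing idea, which is the crux of the paper's proof, is a second test: a vertex $v$ with $V\setminus\{v,u\}\in\FFF$ for all $u$ is isolated if and only if, in addition, $\{v\}\in\FFF$. Indeed, $\FFF\neq 2^V$ together with Proposition~\ref{Nfclique} yields a set $S$ with two non-adjacent vertices $a,b\in N(S)$; both $a$ and $b$ are non-isolated, so a universal vertex $v$ would have $a,b\in N(v)$ and hence $\{v\}\notin\FFF$, while an isolated vertex has $N(v)=\varnothing$ and $\{v\}\in\FFF$. With isolated vertices correctly identified this way, the edge set among non-isolated vertices is pinned down by Corollary~\ref{defGraph1} exactly as you propose, and the rest of your argument (including the correct observation that Proposition~\ref{Nfclique} makes $\FFF$ depend only on $G$ and not on the chosen split partition) goes through.
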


\begin{proof}
Suppose we have obtained a graph $G$ such that $(V,\FFF)$ is the split graph shelling antimatroid defined on it. Because  $\FFF \neq 2^{V}$ and Proposition~\ref{Nfclique}, the graph $G$ must have a non-empty subset $S$ of vertices such that there exist $a,b$ in $N(S)$ with $a\nsim b$  (thus $V\setminus \{a,b\} \notin \FFF$).

If we take an element $v$ in $V$ such that $V\setminus \{u,v\}\in \FFF$ for all $u$ in $V\setminus\{v\}$ (so $v\notin \{a,b\}$), then Corollary~\ref{defGraph1} leaves two options. Either the vertex $v$ is isolated in $G$ or $v$ forms an edge with every non-isolated vertex in $G$. Moreover, if this element $v$ is such that $\{v\}\in \FFF$, then the existence of the subset $S$ in the graph and Proposition~\ref{Nfclique} imply that $v$ must be an isolated vertex in the graph.

We now build the graph $G=(V,E)$ as follows. First, we identify the isolated vertices as the vertices $i$ satisfying $V\setminus \{i,u\}\in \FFF$ for all $u$ in $V\setminus\{i\}$ and $\{i\}\in \FFF$. Next, among all pairs of non-isolated vertices $\{v_1,v_2\}$, the ones that give an edge in $G$ satisfy $V\setminus \{v_1,v_2\}\in \FFF$. We know that there is no other edge by Corollary~\ref{defGraph1}.
\end{proof}

Remark that for the split graph shelling antimatroid $(V,2^{V})$, there exist several split graphs such that $(V,2^{V})$ is the split graph shelling antimatroid defined on it. For instance the complete graph on $V$, or the graph $(V,\varnothing)$.

Testing whether a given antimatroid $(V,\FFF)$ is a split graph shelling antimatroid can be done using arguments in the proof of Proposition~\ref{unicA}: First, build a graph $G=(V,E)$ with $\{v_1,v_2\}\in E$ exactly if $V\setminus \{v_1,v_2\}\in \FFF$ and $V\setminus\{v_1,u_1\}\notin \FFF$ for some $u_1$ in $V\setminus \{v_1\}$ and $V\setminus \{v_2,u_2\}\notin \FFF$ for some $u_2$ in $V\setminus\{v_2\}$. Next check that $G$ is split and $\FFF$ consists of exactly the feasible sets of $G$.

We now distinguish two classes of feasible sets for the split graph shelling antimatroids. A feasible set $F$ is an \emph{$i$-feasible set} if there exists some vertex $i$ in $N(F)\cap I$ (by Corollary~\ref{Lem_i1i2}, such an $i$ is unique). On the other hand, a feasible set $F$ is a \emph{$*$-feasible set} when $N(F)\subseteq K$. Figure~\ref{fig:FWfiltre} below illustrates the two classes of feasible sets.

\begin{figure}[h]
        \centering
       \subfigure[$F_2$ is an $i$-feasible set.]{\label{fig:Wfiltre}
\begin{tikzpicture}[scale=0.7]
\draw (4.5,0.75) node[] {{\color{truegray}$K$}};
\draw (4.5,5) node[] {{\color{truegray}$I$}};
\draw (4,2.5) node[] {{\color{mydgreen}$F_2$}};
 \draw[truegray] (1.5,0.75) ellipse (2.5cm and 1.5cm);
 \draw[truegray] (1.5,5) ellipse (2.5cm and 1cm);
\draw (0,0) node[circle,draw,thick,fill=black,black,inner sep=1.75pt]  (k2) {};
\draw (3,0) node[circle,draw,thick,fill=black,black,inner sep=1.75pt]  (k1) {};
\draw (1.5,1.5) node[circle,draw,thick,fill=black,black,inner sep=1.75pt]  (k3) {};
  \draw (3,5) node[circle,draw,fill=mydgreen,thick,inner sep=1.75pt] (i3)  {}
 (1.5,5) node[circle,draw,fill=white,thick,inner sep=1.75pt, label=above :{$i$}] (i2)  {}
 (0,5) node[circle,draw,fill=white,thick,inner sep=1.75pt] (i1)  {};
\draw (k1)--(k2)--(k3)--(k1);
\draw (i1)--(k2)--(i2)--(k1)--(i3);
   \begin{scope}[fill opacity=0.5]
    \filldraw[fill=mydgreen] ($(k1)+(0.5,0)$)
    to[out=90,in=-90] ($(i3) + (0.5,0)$)
    to[out=90,in=90] ($(i3) - (0.5,0)$)
    to[out=-90,in=90] ($(k3) - (0.5,0)$)
    to[out=-90,in=90] ($(k1)-(0.5,0)$)
    to[out=-90,in=-90] ($(k1)+(0.5,0)$);
    \end{scope}
\end{tikzpicture}}\qquad
\subfigure[$F_1$ is a $*$-feasible set.]{\label{fig:Ffiltre}
                 \begin{tikzpicture}[scale=0.7]
\draw (4.5,0.75) node[] {{\color{truegray}$K$}};
\draw (4.5,5) node[] {{\color{truegray}$I$}};
\draw (-1,2.5) node[] {{\color{mydgreen}$F_1$}};
 \draw[truegray] (1.5,0.75) ellipse (2.5cm and 1.5cm);
 \draw[truegray] (1.5,5) ellipse (2.5cm and 1cm);
\draw (0,0) node[circle,draw,thick,fill=black,black,inner sep=1.75pt]  (k2) {};
\draw (3,0) node[circle,draw,thick,fill=black,black,inner sep=1.75pt]  (k1) {};
\draw (1.5,1.5) node[circle,draw,thick,fill=black,black,inner sep=1.75pt]  (k3) {};
  \draw (3,5) node[circle,draw,fill=white,thick,inner sep=1.75pt] (i3)  {}
 (1.5,5) node[circle,draw,fill=mydgreen,thick,inner sep=1.75pt] (i2)  {}
 (0,5) node[circle,draw,fill=mydgreen,thick,inner sep=1.75pt] (i1)  {};
\draw (k1)--(k2)--(k3)--(k1);
\draw (i1)--(k2)--(i2)--(k1)--(i3);
   \begin{scope}[fill opacity=0.5]
    \filldraw[fill=mydgreen] ($(k2)-(0.5,0)$)
    to[out=90,in=-90] ($(i1) - (0.5,0)$)
    to[out=90,in=90] ($(i2) + (0.5,0)$)
    to[out=-90,in=90] ($(k2) + (0.5,0)$)
    to[out=-90,in=-90] ($(k2)-(0.5,0)$);
    \end{scope}
\end{tikzpicture}}
        \caption{Examples of the two classes of feasible sets.}\label{fig:FWfiltre}
\end{figure}
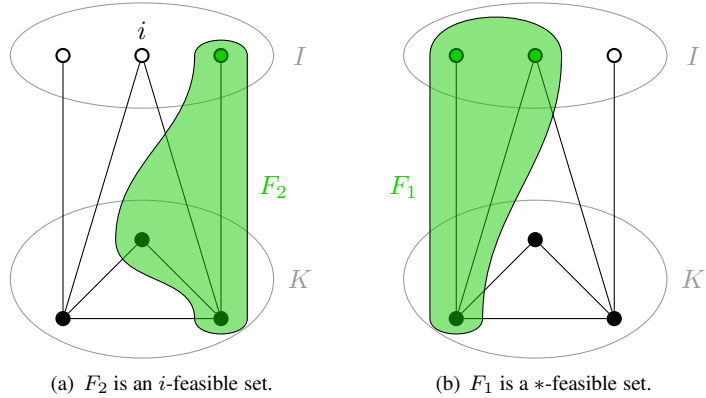

A feasible set of a split graph shelling antimatroid belongs either to the family $\FFF_{*}$ of $*$-feasible sets, or to one family $\FFF_{i}$ of $i$-feasible sets as shown in the following corollary. The proof is straightforward.

\begin{coroll} \label{partitionF}
Let $G=(K\cup I, E)$ be a split graph and $(V,\FFF)$ be the split graph vertex shelling antimatroid defined on $G$. If $I=\{i_1,\ldots,i_{|I|}\}$, then $\FFF$ decomposes into 
\[\FFF_{*} \uplus \FFF_{i_1} \uplus \cdots \uplus \FFF_{|I|}.\]
\end{coroll}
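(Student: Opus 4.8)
The plan is to verify the two requirements directly from the definitions, using Corollary~\ref{Lem_i1i2} for uniqueness. First I would observe that the sets $\FFF_*$ and $\FFF_{i_1}, \ldots, \FFF_{i_{|I|}}$ together cover $\FFF$: given any feasible set $F$, either $N(F) \subseteq K$, in which case $F \in \FFF_*$ by definition of a $*$-feasible set, or $N(F) \cap I \neq \varnothing$, in which case $F$ is an $i$-feasible set for some $i \in N(F) \cap I$ and hence $F \in \FFF_{i}$. So $\FFF = \FFF_* \cup \FFF_{i_1} \cup \cdots \cup \FFF_{i_{|I|}}$.

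Next I would check disjointness. A set $F \in \FFF_*$ has $N(F) \subseteq K$, so $N(F) \cap I = \varnothing$, which means $F$ cannot lie in any $\FFF_{i_j}$; thus $\FFF_*$ is disjoint from each $\FFF_{i_j}$. For two distinct indices $j \neq \ell$, if $F$ were in both $\FFF_{i_j}$ and $\FFF_{i_\ell}$, then both $i_j$ and $i_\ell$ would belong to $N(F) \cap I$, contradicting the uniqueness statement of Corollary~\ref{Lem_i1i2}. Hence the families $\FFF_{i_1}, \ldots, \FFF_{i_{|I|}}$ are pairwise disjoint, and the whole decomposition is a disjoint union, which is exactly the claimed identity $\FFF = \FFF_{*} \uplus \FFF_{i_1} \uplus \cdots \uplus \FFF_{i_{|I|}}$ (with the understanding that some of these families may be empty).

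There is no real obstacle here; the only subtlety worth a sentence is that the decomposition depends on having fixed the clique/independent-set partition $K \cup I$ of the split graph, which the excerpt has already assumed is given. Everything else is a one-line appeal to Corollary~\ref{Lem_i1i2} and the definitions of $i$-feasible and $*$-feasible sets, so the ``proof is straightforward'' remark in the statement is accurate and a short paragraph suffices.
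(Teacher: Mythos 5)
Your proof is correct and matches the argument the paper leaves implicit as ``straightforward'': coverage follows from the dichotomy $N(F)\subseteq K$ versus $N(F)\cap I\neq\varnothing$, and disjointness follows from the uniqueness of the vertex $i$ guaranteed by Corollary~\ref{Lem_i1i2}. Nothing is missing.
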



\subsection*{Connection between split graph shellings and poset antimatroids}
For investigating a split graph $(K\cup I, E)$, we will make use of two functions from $I$ to $2^{K\cup I}$, the \emph{forced set function} and the \emph{unforced set function}, respectively:
\begin{align*}
\fos(i)=&\{k\in K : k\nsim i\} \cup \{i'\in I : N(i') \not\subseteq N(i)\},\\
\ufs(i)=&\fos(i)^{\complement}\setminus \{i\}=\{k\in K : k\sim i\} \cup \{i'\in I : N(i') \subseteq N(i)\}\setminus \{i\} .
\end{align*}
As shown in the next lemma, the forced set function evaluated at $i$ gives us the vertices which belong in any $i$-feasible set. The unforced set function evaluated at $i$ just gives the complement of $\fos(i)$, minus $i$. So for all $i$ in $I$ the vertex set of the graph is equal to $\{i\}\cup \fos(i) \cup \ufs(i)$. Those two definitions are illustrated in Figure~\ref{fig:prices}.

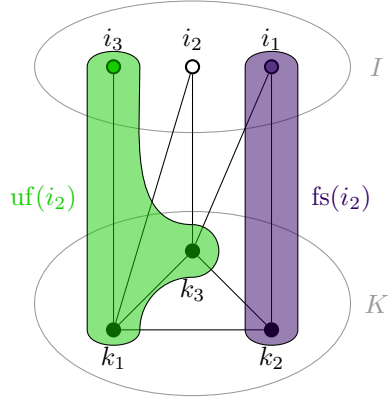
\begin{figure}[h]
        \centering
\begin{tikzpicture}[scale=0.7]
\draw (5,0.5) node[] {{\color{truegray}$K$}};
\draw (5,5) node[] {{\color{truegray}$I$}};
\draw (-1.33,2.5) node[] {{\color{mydgreen}$\ufs(i_2)$}};
\draw (4.33,2.5) node[] {{\color{mauve}$\fos(i_2)$}};
\draw[truegray] (1.5,0.5) ellipse (3cm and 1.75cm);
\draw[truegray] (1.5,5) ellipse (3cm and 1.25cm);
\draw (0,0) node[circle,draw,thick,fill=black,black,inner sep=1.75pt,label=below:{$k_1$}]  (k2) {};
\draw (3,0) node[circle,draw,thick,fill=black,black,inner sep=1.75pt,label=below :{$k_2$}]  (k1) {};
\draw (1.5,1.5) node[circle,draw,thick,fill=black,black,inner sep=1.75pt,label={[label distance=0.15cm]-90:$k_3$}]  (k3) {};
  \draw (3,5) node[circle,draw,fill=mauve!60,thick,inner sep=1.75pt,label=above :{$i_1$}] (i3)  {}
 (1.5,5) node[circle,draw,fill=white,thick,inner sep=1.75pt,label=above :{$i_2$}] (i2)  {}
 (0,5) node[circle,draw,fill=mydgreen,thick,inner sep=1.75pt,label=above :{$i_3$}] (i1)  {};
\draw (k1)--(k2)--(k3)--(k1);
\draw (i1)--(k2)--(i2)--(k3)--(i3)--(k1);
   \begin{scope}[fill opacity=0.5]
    \filldraw[fill=mydgreen] ($(k2)-(0.5,0)$)
    to[out=90,in=-90] ($(i1) - (0.5,0)$)
    to[out=90,in=90] ($(i1) + (0.5,0)$)
    to[out=-90,in=180] ($(k3) + (0,0.5)$)
    to[out=0,in=90] ($(k3) + (0.5,0)$)
    to[out=-90,in=0] ($(k3) - (0,0.5)$)
    to[out=180,in=90] ($(k2) + (0.5,0)$)
    to[out=-90,in=-90] ($(k2)-(0.5,0)$);
    \end{scope}
     \begin{scope}[fill opacity=0.5]
    \filldraw[fill=mauve] ($(k1)+(0.5,0)$)
    to[out=90,in=-90] ($(i3) + (0.5,0)$)
    to[out=90,in=90] ($(i3) - (0.5,0)$)
    to[out=-90,in=90] ($(k1)-(0.5,0)$)
    to[out=-90,in=-90] ($(k1)+(0.5,0)$);
    \end{scope}
\end{tikzpicture}
        \caption{Illustration of the forced set and unforced set functions.}\label{fig:prices}
\end{figure}

\begin{lem} \label{unblIN}
Let $G=(K\cup I, E)$ be a split graph and $(V,\FFF)$ be the split graph vertex shelling antimatroid defined on $G$. Let $i$ be in $I$, then for any $i$-feasible set $F$ in $\FFF$ we have $\fos(i)\subseteq F$.
\end{lem}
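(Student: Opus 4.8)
The plan is to lean entirely on the characterization of Proposition~\ref{Nfclique}: a set is feasible exactly when its neighbourhood induces a clique. So let $F$ be an $i$-feasible set. By definition $i\in N(F)\cap I$, so in particular $i\notin F$, and by Proposition~\ref{Nfclique} the vertex set $N(F)$ induces a clique in $G$. I would then split $\fos(i)$ along its defining union, $\{k\in K : k\nsim i\}$ and $\{i'\in I : N(i')\not\subseteq N(i)\}$, and show each part lies in $F$.

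\emph{The clique part.} Since $i\in N(F)$, there is a vertex $v\in F$ with $v\sim i$; as $i\in I$ and $I$ is independent while $v\neq i$, we get $v\in K$, so $K\cap F\neq\varnothing$. Fix $k'\in K\cap F$. Now take any $k\in K$ with $k\nsim i$ and suppose, for a contradiction, that $k\notin F$. Then $k\neq k'$, and since $K$ is a clique, $k\sim k'$, hence $k\in N(F)$. But then $k$ and $i$ both lie in $N(F)$, which induces a clique, forcing $k\sim i$ --- contradicting $k\nsim i$. Therefore $k\in F$, i.e.\ $\{k\in K : k\nsim i\}\subseteq F$.

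\emph{The independent part.} Take any $i'\in I$ with $N(i')\not\subseteq N(i)$ and pick $k\in N(i')\setminus N(i)$; since $i'\in I$ we have $k\in K$, and $k\notin N(i)$ means $k\nsim i$, so by the clique part already established, $k\in F$. Suppose, for a contradiction, that $i'\notin F$. Then $i'\sim k\in F$ gives $i'\in N(F)$; combined with $i\in N(F)$ and the clique property of $N(F)$, we get $i\sim i'$. But $i$ and $i'$ both belong to the independent set $I$, so $i=i'$, whence $N(i')=N(i)$ --- contradicting the choice of $i'$. Hence $i'\in F$, and together with the previous paragraph we conclude $\fos(i)\subseteq F$. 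There is no real obstacle here; the only point requiring care is ensuring that the vertices in question actually land in $N(F)$ (that they are outside $F$ yet adjacent to something in $F$), which is exactly why one first records $K\cap F\neq\varnothing$ and then feeds the clique part into the independent part.
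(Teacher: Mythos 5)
Your proof is correct and follows essentially the same route as the paper's: both arguments reduce everything to Proposition~\ref{Nfclique} and derive a contradiction with $N(F)$ inducing a clique, split according to the two parts of $\fos(i)$. The only (harmless) difference is that you feed the clique part into the independent part instead of redoing a sub-case analysis on whether the witness $k\in N(i')\setminus N(i)$ lies in $F$, which makes the second half slightly cleaner.
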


\begin{proof} 
If $F$ is an $i$-feasible set, we have $i\in I\cap F^{\complement}$ and there is a $k$ in $K\cap F$ with $k\sim i$. If a vertex $v$ in $\fos(i)$ is not in $F$ then we have two possibilities. Either $v\in K$ and so $i\nsim v$ (by definition of $\fos(i)$), but that contradicts Proposition~\ref{Nfclique} because $\{i,v\}\subseteq N(F)$, or $v\in I$ and there is $k'\in K$ such that $k'\nsim i$ but $k'\sim v$ (by definition of $\fos(i)$). We know that $k\sim k'$ by definition of $K$, but that also contradicts Proposition~\ref{Nfclique} because if $k'\notin F$, then $\{i,k'\}\subseteq N(F)$, and if $k'\in F$ then $\{i,v\}\subseteq N(F)$ with $i\nsim v$ by definition of $I$.
\end{proof}

We will now establish the link between split graph shelling antimatroids and poset antimatroids. If we have a split graph $G=(K\cup I, E)$, we build a poset on $K\cup I$ with the binary relation $\prec$ defined by $u \prec v$  if and only if $u\in K$, $v\in I$ and $u\sim v$ in $G$. The resulting poset $(K\cup I,\prec)$ is of height at most two (the number of elements in a chain is at most two). Next, we prove that all the structures $(V,\FFF_{*})$ and $(\ufs(i),\{F\setminus \fos(i) : F \in \FFF_{i}\}\cup \varnothing)$ for $i$ in $I$ are poset antimatroids.

\begin{prop}\label{pAMposet0.0}
Let $G=(K\cup I, E)$ be a split graph and $(V,\FFF)$ be the split graph vertex shelling antimatroid defined on $G$, then $\FFF_{*}=\flt(K\cup I,\prec)$.
\end{prop}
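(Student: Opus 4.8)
The plan is to reduce the claimed equality to a side-by-side comparison of the defining conditions on a set $F\subseteq V$, using Proposition~\ref{Nfclique} to trade ``feasibility'' for the condition ``$N(F)$ is a clique''. Since $K$ is a clique, any $F$ with $N(F)\subseteq K$ automatically has $N(F)$ a clique, so by Proposition~\ref{Nfclique} such an $F$ is feasible; together with the definition of a $*$-feasible set this yields $\FFF_{*}=\{F\subseteq V : N(F)\subseteq K\}$. It therefore suffices to prove that, for $F\subseteq K\cup I$, the condition $N(F)\subseteq K$ is equivalent to $F$ being a filter of the poset $(K\cup I,\prec)$.

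First I would spell out what membership in $\flt(K\cup I,\prec)$ means for this particular relation $\prec$. By construction the only comparabilities are of the form $k\prec i$ with $k\in K$, $i\in I$ and $k\sim i$; hence $F$ is a filter precisely when, for every $k\in K\cap F$ and every $i\in I$ with $k\sim i$, one has $i\in F$.

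Next I would translate $N(F)\subseteq K$ into the very same statement. Because $I$ is an independent set, every neighbour of a vertex of $I$ lies in $K$; consequently a vertex $i\in I$ belongs to $N(F)$ exactly when $i\notin F$ and some $k\in K\cap F$ satisfies $k\sim i$. Thus $N(F)\cap I=\varnothing$ --- equivalently $N(F)\subseteq K$ --- says precisely that whenever some $k\in K\cap F$ is adjacent to a vertex $i\in I$, then $i\in F$. This is word for word the filter condition obtained above, which closes the argument.

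I do not expect a genuine obstacle here: the statement is essentially a book-keeping exercise once Proposition~\ref{Nfclique} is available. The only point that needs a little care is the direction showing that a filter is $*$-feasible, not merely a set with $N(F)\subseteq K$: one must remember to invoke Proposition~\ref{Nfclique} to see that $N(F)\subseteq K$ (hence $N(F)$ a clique) forces $F\in\FFF$, and only then note that $N(F)\subseteq K$ places $F$ in $\FFF_{*}$.
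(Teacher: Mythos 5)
Your proof is correct and follows essentially the same route as the paper's: both directions reduce to the observation that $N(F)\subseteq K$ is, via the independence of $I$, exactly the filter condition for $\prec$, with Proposition~\ref{Nfclique} supplying feasibility in the converse direction. Your reformulation of $\FFF_{*}$ as $\{F\subseteq V : N(F)\subseteq K\}$ merely packages the paper's two inclusions as a single chain of equivalences.
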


\begin{proof}
First we show that  $\FFF_{*}\subseteq \flt(K\cup I,\prec)$. Take  $F$ in $\FFF_{*}$, by definition of a $*$-feasible set, if there is a $k\in F\cap K$, then $N(k) \cap I \subseteq F$. Then $F$ is a filter of $(K\cup I,\prec)$. 

Next we show that  $\FFF_{*}\supseteq \flt(K\cup I,\prec)$. Suppose that $F$ is  a filter of  $(K\cup I,\prec)$, then $N(F) \subseteq K$ by the definition of $\prec$, and by Proposition~\ref{Nfclique} we know that $F$ is a feasible set because $K$ is a clique, and also a $*$-feasible set because $N(F)\cap I =\varnothing$.
\end{proof}

In the following, we use $(\ufs(i),\prec)$ to denote the poset formed on $\ufs(i)$ with the binary relation $\prec$ restricted to $\ufs(i)$.

\begin{prop}\label{pAMposet0.1}
Let $G=(K\cup I, E)$ be a split graph and $(V,\FFF)$ be the split graph vertex shelling antimatroid defined on $G$, then for all $i$ in $I$, $\FFF_{i}=\{\fos(i)\cup H: H\in \flt(\ufs(i),\prec)\, , H\cap K \neq \varnothing \}$.
\end{prop}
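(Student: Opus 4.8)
The plan is to prove the two inclusions separately, using Lemma~\ref{unblIN} to pin down the ``forced'' part of any $i$-feasible set and Proposition~\ref{Nfclique} to recognize feasibility. Throughout I would keep in mind that the vertex set decomposes as the disjoint union $V=\{i\}\uplus\fos(i)\uplus\ufs(i)$, that $N(I)\subseteq K$ because $I$ is independent, and that in the restricted poset $(\ufs(i),\prec)$ the elements of $\ufs(i)\cap K$ are minimal while those of $\ufs(i)\cap I$ are maximal. Consequently a subset $H\subseteq\ufs(i)$ is a filter of $(\ufs(i),\prec)$ exactly when, for every $k\in H\cap K$ and every $i'\in\ufs(i)\cap I$ with $k\sim i'$, one has $i'\in H$; I would record this reformulation first.

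For the inclusion $\FFF_{i}\subseteq\{\fos(i)\cup H : H\in\flt(\ufs(i),\prec),\ H\cap K\neq\varnothing\}$, I would start from an $i$-feasible set $F$. Then $i\notin F$, and by Lemma~\ref{unblIN} we have $\fos(i)\subseteq F$, so $H:=F\setminus\fos(i)$ is contained in $V\setminus(\{i\}\cup\fos(i))=\ufs(i)$ and $F=\fos(i)\cup H$. Since $i\in N(F)\cap I$ there is some $k\in K\cap F$ with $k\sim i$; such a $k$ lies in $\ufs(i)\setminus\fos(i)$, so $H\cap K\neq\varnothing$. To check that $H$ is a filter, I would assume $k\in H\cap K$, $i'\in\ufs(i)\cap I$ and $k\sim i'$ but $i'\notin H$; then $i'\notin F$ (because $i'\notin\fos(i)$), while $k\in K\cap F$ with $k\sim i'$ puts $i'\in N(F)\cap I$, and Corollary~\ref{Lem_i1i2} forces $i'=i$, contradicting $i\notin\ufs(i)$.

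For the reverse inclusion, I would take a filter $H$ of $(\ufs(i),\prec)$ with $H\cap K\neq\varnothing$, set $F:=\fos(i)\cup H$, and show $F$ is $i$-feasible. First $i\notin F$ (since $i\notin\fos(i)\cup\ufs(i)$), and any $k\in H\cap K\subseteq\ufs(i)\cap K$ satisfies $k\sim i$, so $i\in N(F)\cap I$. Then I would invoke Proposition~\ref{Nfclique}: it suffices to prove that $N(F)\subseteq K\cup\{i\}$ with $i$ adjacent to every vertex of $N(F)\cap K$, since $K$ is a clique. If $k'\in N(F)\cap K$ but $k'\nsim i$, then $k'\in\fos(i)\subseteq F$, contradicting $k'\notin F$, so every vertex of $N(F)\cap K$ is adjacent to $i$. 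For the $I$-part, suppose $j\in I\setminus\{i\}$ lies in $N(F)$ and pick $f\in K\cap F$ with $f\sim j$. If $f\in\fos(i)$ then $f\nsim i$, hence $f\in N(j)\setminus N(i)$, so $N(j)\not\subseteq N(i)$ and $j\in\fos(i)\subseteq F$, a contradiction. If $f\in H\cap K$, then either $N(j)\subseteq N(i)$, in which case $j\in\ufs(i)\cap I$ and the filter property applied to $f\sim j$ gives $j\in H\subseteq F$, or $N(j)\not\subseteq N(i)$ and again $j\in\fos(i)\subseteq F$; both are contradictions, so $N(F)\cap I\subseteq\{i\}$.

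The genuinely routine parts are the containment $H\subseteq\ufs(i)$ and the reformulation of ``filter'', both immediate from the definitions of $\fos$, $\ufs$ and $\prec$; the forward inclusion is then mostly bookkeeping on top of Lemma~\ref{unblIN} and Corollary~\ref{Lem_i1i2}. I expect the main obstacle to be the final case analysis in the reverse inclusion, namely establishing $N(F)\cap I=\{i\}$: this is where one must exploit the dichotomy $N(j)\subseteq N(i)$ versus $N(j)\not\subseteq N(i)$ --- which is exactly the split between $\ufs(i)$ and $\fos(i)$ --- together with the filter hypothesis, to force a hypothetical extra neighbour $j\in I$ back into $F$. Once that is in hand, Proposition~\ref{Nfclique} finishes the proof.
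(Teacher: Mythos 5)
Your proof is correct and follows essentially the same route as the paper: Lemma~\ref{unblIN} plus Corollary~\ref{Lem_i1i2} for the inclusion $\FFF_{i}\subseteq\{\fos(i)\cup H:\dots\}$, and Proposition~\ref{Nfclique} for the converse. Your treatment of the converse is in fact more careful than the paper's, which compresses the verification that $N(\fos(i)\cup H)\cap I\subseteq\{i\}$ into the single observation $N(\fos(i))\subseteq N(i)$, whereas you make explicit the needed use of the filter hypothesis and the dichotomy $N(j)\subseteq N(i)$ versus $N(j)\not\subseteq N(i)$.
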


\begin{proof}
Let $i$ be in $I$. We first show that  $\FFF_{i}\subseteq \{\fos(i)\cup H: H\in \flt(\ufs(i),\prec)\, , H\cap K \neq \varnothing \}$. Take a $F$ in $\FFF_{i}$, then $\fos(i)\subseteq F$ by Lemma~\ref{unblIN}. We have directly from the definition of $\FFF_i$ that $(F\setminus \fos(i))  \cap K \neq \varnothing$, note also that $F\setminus \fos(i)=F\cap \ufs(i)$. Now we just want to show that $F\cap \ufs(i)$ is a filter of $(\ufs(i),\prec)$. This is equivalent to showing that $N(k) \cap I \cap \ufs(i) \subseteq F$ for all $k$ in $K\cap (F\cap \ufs(i))$. So if we take a  $k$ in $K\cap (F\cap \ufs(i))$, then $k\sim i$, but $F$ is a feasible set, so we must have, by Corollary~\ref{Lem_i1i2}, $N(k)\cap I \cap \ufs(i) \subseteq F$. 

Secondly, we show that $\FFF_{i}\supseteq \{\fos(i)\cup H: H\in \flt(\ufs(i),\prec)\, , H\cap K \neq \varnothing \}$. Suppose that $H$ is a filter of  $(\ufs(i),\prec)$ such that $H\cap K \neq \varnothing $. We need to show that $H\cup \fos(i)$ is a feasible set. We use again Proposition~\ref{Nfclique} and check that $N(\fos(i) \cup H)$ induces a clique. We only need to observe that $N(\fos(i))\subseteq  N(i)$ by definition of the function $\fos$, and this implies $N(\fos(i) \cup H)\subseteq N(i)\cup\{i\}$ which is a clique. Finally, by construction $H\cap N(i)\neq  \varnothing$ and $i \notin H$, so $\fos(i) \cup H$ is an $i$-feasible set and the proof is complete.
\end{proof}

\begin{coroll}
Let $G=(K\cup I, E)$ be a split graph and $(V,\FFF)$ be the split graph vertex shelling antimatroid defined on $G$, then $(V,\FFF_{*})$ and $(\ufs(i),\{F\setminus \fos(i) : F \in \FFF_{i}\}\cup \varnothing)$ for $i$ in $I$ are all poset antimatroids.
\end{coroll}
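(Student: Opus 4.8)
The plan is to read both assertions off Propositions~\ref{pAMposet0.0} and~\ref{pAMposet0.1}, invoking only the definition of a poset antimatroid as the shelling antimatroid $(X,\flt(X,\leq))$ of a poset $(X,\leq)$. The first assertion is then immediate: Proposition~\ref{pAMposet0.0} states $\FFF_{*}=\flt(K\cup I,\prec)$, so $(V,\FFF_{*})$ is by definition the poset shelling antimatroid of $(K\cup I,\prec)$, hence a poset antimatroid.

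For the second assertion, fix $i\in I$ and write $\mathcal{G}_i=\{F\setminus\fos(i):F\in\FFF_{i}\}\cup\{\varnothing\}$; the goal is to exhibit a poset on $\ufs(i)$ whose filter family is exactly $\mathcal{G}_i$, the natural candidate being $(\ufs(i),\prec)$. First I would simplify $F\setminus\fos(i)$ for an $i$-feasible set $F$. Since the vertex set decomposes as $K\cup I=\{i\}\uplus\fos(i)\uplus\ufs(i)$, since $i\notin F$ because $F$ is $i$-feasible, and since $\fos(i)\subseteq F$ by Lemma~\ref{unblIN}, one has $F=\fos(i)\uplus(F\cap\ufs(i))$, so $F\setminus\fos(i)=F\cap\ufs(i)$. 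Feeding this into Proposition~\ref{pAMposet0.1} rewrites it as $\{F\setminus\fos(i):F\in\FFF_{i}\}=\{H\in\flt(\ufs(i),\prec):H\cap K\neq\varnothing\}$, so that $\mathcal{G}_i=\{H\in\flt(\ufs(i),\prec):H\cap K\neq\varnothing\}\cup\{\varnothing\}$.

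It then remains to show $\mathcal{G}_i=\flt(\ufs(i),\prec)$, after which $(\ufs(i),\mathcal{G}_i)$ is by definition the poset shelling antimatroid of $(\ufs(i),\prec)$ and the argument is complete. One inclusion is free, $\varnothing$ being a filter. For the converse, I would show that the constraint ``$H\cap K\neq\varnothing$'' discards no filter other than $\varnothing$, using the precise form of $\ufs(i)$: the $I$-vertices kept in $\ufs(i)$ are exactly the $i'$ with $N(i')\subseteq N(i)$, so that in $\prec$ restricted to $\ufs(i)$ each such $i'$ lies above the $K$-vertices of $\ufs(i)$ to which it is adjacent. I expect this reconciliation of the condition ``$H\cap K\neq\varnothing$'' coming from Proposition~\ref{pAMposet0.1} with the full filter family of $(\ufs(i),\prec)$ to be the only real step; everything else is bookkeeping with the decomposition $K\cup I=\{i\}\uplus\fos(i)\uplus\ufs(i)$, Lemma~\ref{unblIN}, Propositions~\ref{pAMposet0.0} and~\ref{pAMposet0.1}, and the definition of a poset antimatroid.
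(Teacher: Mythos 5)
The first assertion is unproblematic and is exactly what the paper does: Proposition~\ref{pAMposet0.0} literally says $\FFF_{*}=\flt(K\cup I,\prec)$. The trouble is in your second half. The pivotal claim --- that the condition $H\cap K\neq\varnothing$ in Proposition~\ref{pAMposet0.1} discards no filter of $(\ufs(i),\prec)$ other than $\varnothing$ --- is false, and the justification you sketch for it actually points in the wrong direction. In $\prec$ the elements of $I$ sit \emph{above} those of $K$, and a filter is an \emph{upper} set; hence every element of $I\cap\ufs(i)$ is maximal in $(\ufs(i),\prec)$, so every subset of $I\cap\ufs(i)$ is automatically a filter, and all the nonempty ones are discarded by the condition $H\cap K\neq\varnothing$. (Your sentence ``each such $i'$ lies above the $K$-vertices \dots{} to which it is adjacent'' would give the conclusion you want only if filters were down-sets.)

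Concretely, take the three-vertex path with $K=\{k\}$, $I=\{i,i'\}$ and edges $\{i,k\},\{k,i'\}$. Then $\fos(i)=\varnothing$, $\ufs(i)=\{k,i'\}$ with $k\prec i'$, so $\flt(\ufs(i),\prec)=\{\varnothing,\{i'\},\{k,i'\}\}$; but $\FFF_{i}=\{\{k,i'\}\}$ (the set $\{i'\}$ has $N(\{i'\})=\{k\}\subseteq K$ and is a $*$-feasible set, not an $i$-feasible one), so the family in the statement is $\{\varnothing,\{k,i'\}\}$. The equality you are aiming for therefore fails whenever $I\cap\ufs(i)\neq\varnothing$, and the failure is not repairable by choosing a different poset on $\ufs(i)$: the family $\{\varnothing,\{k,i'\}\}$ violates accessibility, so it is not the filter family of any poset (indeed not an antimatroid at all). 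Your argument goes through only under the extra hypothesis $I\cap\ufs(i)=\varnothing$, i.e.\ no $i'\in I\setminus\{i\}$ has $N(i')\subseteq N(i)$, in which case $\ufs(i)=N(i)$ is an antichain for $\prec$ and the family is all of $2^{N(i)}$. The paper's own proof is the single sentence ``this follows directly from Propositions~\ref{pAMposet0.0} and~\ref{pAMposet0.1}'' and does not address this point either, so the discrepancy you have run into is worth flagging to the authors rather than being smoothed over; but as written, your key step is wrong.
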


\begin{proof}
This follows directly from Propositions~\ref{pAMposet0.0} and~\ref{pAMposet0.1}.
\end{proof}

Proposition~\ref{pAMposet0.1} shows us that an $i$-feasible set can be decomposed into $\fos(i)$ and a filter of $(\ufs(i),\prec)$. It is easy to see that this decomposition is unique.

The above definitions of $*$-feasible and $i$-feasible sets lead to a better understanding of the poset produced by the feasible sets of a split graph shelling antimatroid. Indeed, for a split graph shelling antimatroid $(V,\FFF)$ built on a split graph $(K\cup I, E)$, we decompose the structure of the poset formed by its feasible sets into a poset $(\FFF_{*},\subseteq)$ and $|I|$ posets $(\FFF_{i},\subseteq)$, for $i\in I$, as illustrated by Figure~\ref{LattF} and detailed in Example~\ref{LattFex}.

\begin{figure}[h]
        \centering
\begin{tikzpicture}[scale=0.7]
\draw (0,0) node[fill=white, inner sep=2pt] (empt) {$\varnothing$};
  \begin{scope}[fill opacity=0.5]
    \filldraw[fill=mydgreen] (empt)--(3,4)[dotted]--(-3,4)--(empt);
    \end{scope}
 \draw (-3,4)--(empt)--(3,4);
 \draw (2,1) node[] {{\color{mydgreen}$(\FFF_{*},\subseteq)$}};
 \draw  (-0.75,1.5) node[circle,draw,fill=black,thick,inner sep=1.25pt,label=right :{\footnotesize $\fos(i)$}] (fsi)  {};
  \draw  (0.5,2.5) node[circle,draw,fill=black,thick,inner sep=1.25pt,label=left :{\footnotesize $\fos(i')$}] (fsi2)  {};
 \draw[dashed] (fsi)--(-4.5,1.5)--(-4.5,2);
  \begin{scope}[fill opacity=0.5]
    \filldraw[fill=mauve] (-4.75,2)--(-5.25,3.5)[dotted]--(-3.75,3.5)--(-4.25,2);
    \end{scope}  
  \draw (-5.25,3.5)--(-4.75,2)--(-4.25,2)--(-3.75,3.5);
 \draw[dashed] (fsi2)--(5.75,2.5)--(5.75,3);  
  \begin{scope}[fill opacity=0.5]
    \filldraw[fill=mauve] (6,3)--(6.5,4.5)[dotted]--(5,4.5)--(5.5,3);
    \end{scope}  
  \draw (6.5,4.5)--(6,3)--(5.5,3)--(5,4.5);
 \draw  (-6,2.5) node[] {{\color{mauve}$(\FFF_{i},\subseteq)$}};
 \draw  (7.25,3.5) node[] {{\color{mauve}$(\FFF_{i'},\subseteq)$}};
\end{tikzpicture}
  \caption{Schematic view of the poset produced by the feasible sets.}
  \label{LattF}
\end{figure}
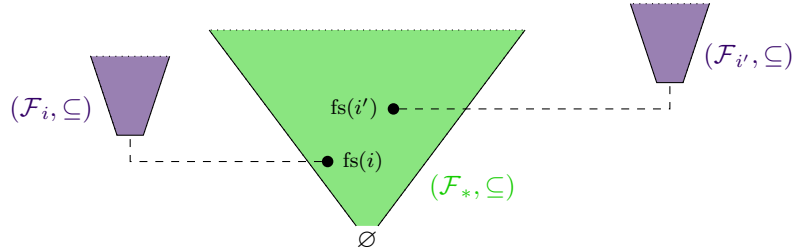

 \begin{expl} \label{LattFex}
Figure~\ref{ex:LattFtot} shows a split graph $(K\cup I,E)$ and the poset formed  by the feasible sets of the  split graph shelling antimatroid built on it. The dashed link between a feasible set $F$ and the union sign means that above this point, we look at $i$-feasible sets (for some $i \in I$) and the sets considered must be taken in union with $\fos(i)$. 
 \end{expl}

   \begin{figure}[h]
        \centering
\begin{tikzpicture}[scale=0.7]
\draw (3.5,0) node[] {{\color{truegray}$K$}};
\draw (3,3) node[] {{\color{truegray}$I$}};
 \draw[truegray] (1,0) ellipse (2cm and 1cm);
 \draw[truegray] (0.5,3) ellipse (2cm and 1cm);
 \draw (0,0) node[circle,draw,fill=black,thick,inner sep=1.75pt,label=below :{\footnotesize $1$}] (1)  {}
 (1,0.75) node[circle,draw,fill=black,thick,inner sep=1.75pt,label=below :{\footnotesize $2$}] (2)  {}
 (2,0) node[circle,draw,fill=black,thick,inner sep=1.75pt,label=below :{\footnotesize $3$}] (3)  {};
 \draw (-0.5,3) node[circle,draw,fill=white,thick,inner sep=1.75pt,label=above :{\footnotesize $4$}] (4)  {}
 (0.5,3) node[circle,draw,fill=white,thick,inner sep=1.75pt,label=above :{\footnotesize $5$}] (5)  {}
 (1.5,3) node[circle,draw,fill=white,thick,inner sep=1.75pt,label=above :{\footnotesize $6$}] (6)  {};
\draw (4)--(1)--(5)--(2)--(6)--(3);
\draw (1)--(2)--(3)--(1);
\end{tikzpicture}

\bigskip

        \begin{tikzpicture}[scale=0.7]
\draw (0,0) node[fill=white, inner sep=0.5pt] (empt) {\footnotesize $\varnothing$}  
      (-3,1.5) node[fill=white, inner sep=0.5pt] (4) {\footnotesize $\{4\}$}
      (0,1.5) node[fill=white, inner sep=0.5pt] (5) {\footnotesize $\{5\}$}
      (3,1.5) node[fill=white, inner sep=0.5pt] (6) {\footnotesize $\{6\}$}
      (-4.5,3) node[fill=white, inner sep=0.5pt] (45) {\footnotesize $\{4,5\}$}
      (-1.5,3) node[fill=white, inner sep=0.5pt] (56) {\footnotesize $\{5,6\}$}
      (1.5,3) node[fill=white, inner sep=0.5pt] (46) {\footnotesize $\{4,6\}$}
      (4.5,3) node[fill=white, inner sep=0.5pt] (36) {\footnotesize $\{3,6\}$}
      (-6,4.5) node[fill=white, inner sep=0.5pt] (145) {\footnotesize $\{1,4,5\}$}
      (-3.5,4.5) node[fill=white, inner sep=0.5pt] (256) {\footnotesize $\{2,5,6\}$}
      (0,4.5) node[fill=white, inner sep=0.5pt] (456) {\footnotesize $\{4,5,6\}$}
      (3.5,4.5) node[fill=white, inner sep=0.5pt] (356) {\footnotesize $\{3,5,6\}$}
      (6,4.5) node[fill=white, inner sep=0.5pt] (346) {\footnotesize $\{3,4,6\}$}
      (-4.5,6) node[fill=white, inner sep=0.5pt] (1456) {\footnotesize $\{1,4,5,6\}$}
      (-1.5,6) node[fill=white, inner sep=0.5pt] (2456) {\footnotesize $\{2,4,5,6\}$}
      (1.5,6) node[fill=white, inner sep=0.5pt] (3456) {\footnotesize $\{3,4,5,6\}$} 
      (4.5,6) node[fill=white, inner sep=0.5pt] (2356) {\footnotesize $\{2,3,5,6\}$}
      (-3,7.5) node[fill=white, inner sep=0.5pt] (12456) {\footnotesize $\{1,2,4,5,6\}$}
      (0,7.5) node[fill=white, inner sep=0.5pt] (23456) {\footnotesize $\{2,3,4,5,6\}$}
      (3,7.5) node[fill=white, inner sep=0.5pt] (13456) {\footnotesize $\{1,3,4,5,6\}$}
      (0,9) node[fill=white, inner sep=0.5pt] (V) {\footnotesize $\{1,2,3,4,5,6\}$};
\draw (empt)--(4) (empt)--(5) (empt)--(6);
\draw (4)--(45)--(5)--(56)--(6)--(46)--(4) (6)--(36);
\draw (145)--(45) (256)--(56)--(356)--(36) (46)--(346)--(36) (56)--(456)--(45) (456)--(46);
\draw (145)--(1456) (2456)--(256)--(2356)--(356)--(3456)--(346) (2456)--(456)--(1456) (3456)--(456);
\draw (1456)--(12456)--(2456)--(23456)--(2356) (1456)--(13456)--(3456);
\draw (V)--(12456) (V)--(23456) (V)--(13456);
\draw[decorate,thick,mydgreen,decoration={brace,amplitude=5pt,mirror}] 
    (-6,-0.25)  -- (6,-0.25)  node [midway,yshift=-0.33cm] {\scriptsize \color{mydgreen} $(\FFF_{*},\subseteq)$};
\node (5F)  at (9,3) {$  \cup$};
\draw (10,4.5) node[fill=white, inner sep=0.5pt] (5F2) {\footnotesize $\{2\}$}
      (10,6) node[fill=white, inner sep=0.5pt] (5F24) {\footnotesize $\{2,4\}$}
      (8,6) node[fill=white, inner sep=0.5pt] (5F14)  {\footnotesize $\{1,4\}$}
      (9,7.5) node[fill=white, inner sep=0.5pt] (5F124)  {\footnotesize $\{1,2,4\}$};
\draw[dashed] (36)--(5F) ;
\draw[dashed] (5F2)--(5F)--(5F14);
\draw (5F24)--(5F2);
\draw (5F14)--(5F124)--(5F24);
\draw[decorate,thick,mauve,decoration={brace,amplitude=5pt,mirror}] 
    (8,2.75)  -- (10,2.75)  node [midway,yshift=-0.33cm] {\scriptsize \color{mauve} $(\FFF_{5},\subseteq)$};
\node (4F)  at (6,6) {$  \cup$};
\draw (6,7.5) node[fill=white, inner sep=0.5pt] (4F1)  {\footnotesize $\{1\}$};
\draw[dashed] (2356)--(4F);
\draw[dashed] (4F)--(4F1);
\draw[decorate,thick,mauve,decoration={brace,amplitude=5pt,mirror}] 
    (5.5,5.75)  -- (6.5,5.75)  node [midway,yshift=-0.33cm] {\scriptsize  \color{mauve} $(\FFF_{4},\subseteq)$};
\node (6F)  at (-9,4.5) {$  \cup$};
\draw (-10,6) node[fill=white, inner sep=0.5pt] (6F2) {\footnotesize $\{2\}$}
      (-8,6) node[fill=white, inner sep=0.5pt] (6F3)  {\footnotesize $\{3\}$}
      (-9,7.5) node[fill=white, inner sep=0.5pt] (6F23)  {\footnotesize $\{2,3\}$};
\draw[dashed] (145)--(6F);
\draw[dashed] (6F2)--(6F)--(6F3);
\draw (6F2)--(6F23)--(6F3);
\draw[decorate,thick,mauve,decoration={brace,amplitude=5pt,mirror}] 
    (-10,4.25)  -- (-8,4.25)  node [midway,yshift=-0.33cm] {\scriptsize  \color{mauve} $(\FFF_{6},\subseteq)$};
    
\draw (5F2)--(36);
\draw (5F24)--(346)--(5F14);
\draw (6F2)--(145)--(6F3);
\draw (2356)--(4F1);
\end{tikzpicture}
        \caption{A split graph and the feasible sets posets associeted with its shelling antimatroid.}\label{ex:LattFtot}
\end{figure}
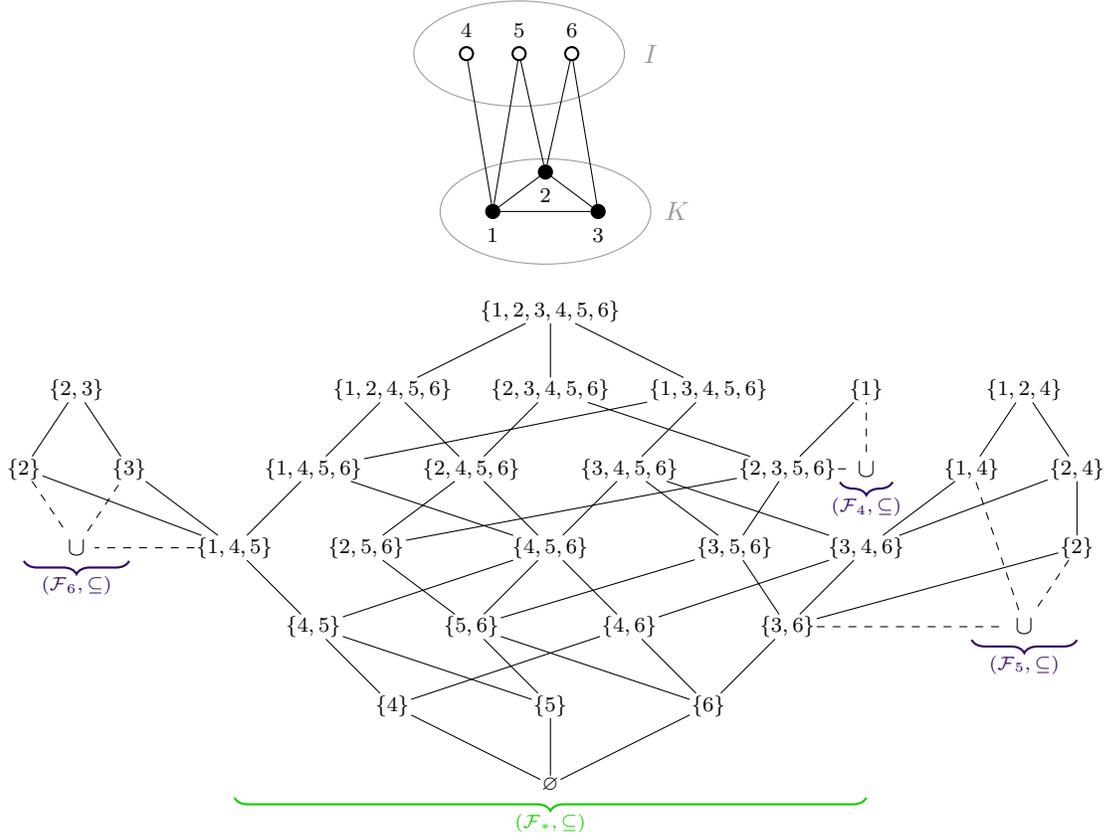

\subsection*{The path poset of a split graph shelling antimatroid}

 For an antimatroid, a \textsl{path} is a feasible set that cannot be decomposed into the union of two other (non-empty) feasible sets.  Alternatively, a path is a feasible set containing a unique element whose removal leaves a feasible set.  When ordered by inclusion, the family of paths forms the path poset. The paths of any antimatroid completely determine the antimatroid. To the contrary, the poset structure of the path poset does not in general determine the antimatroid.  The two following collections of paths, when ordered by inclusion, form isomorphic posets although they arise from two non-isomorphic antimatroids:  $\{ \{a\}$, $\{b\}$, $\{a,c\}$, $\{b,c\} \}$ and $\{ \{d\}$, $\{e\}$, $\{d,f\}$, $\{e,g\} \}$. 

For a split graph shelling antimatroid $(V,\FFF)$ built on a split graph $(K\cup I, E)$, the path poset is easy to obtain in terms of the following sets: 
\begin{align*}
P_1&=\{\{i\}:i\in I\},\\
P_2&=\{ \{\{k\}\cup(N(k)\cap I)\}: k\in K  \},\\
P_3&=\{ \fos(i)\cup \{k\}\cup (N(k)\cap I\setminus\{i\}): i\in I, k\in N(i)  \},\\
P&=P_1\cup P_2\cup P_3.
\end{align*}

\begin{prop}\label{pathSplit}
Given a split graph shelling antimatroid $(V,\FFF)$ built on a split graph $(K\cup I, E)$ without any vertex $i$ in $I$ such that $N(i)=K$, its set of paths equals $P$.
\end{prop}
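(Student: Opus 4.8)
The plan is to prove the two inclusions $P\subseteq\FFF$ (every set listed is indeed a path) and ``every path of $(V,\FFF)$ lies in $P$'' separately, using the decomposition $\FFF=\FFF_*\uplus\FFF_{i_1}\uplus\cdots\uplus\FFF_{i_{|I|}}$ from Corollary~\ref{partitionF} together with the descriptions of $\FFF_*$ and the $\FFF_i$ as (images of) poset antimatroids given in Propositions~\ref{pAMposet0.0} and~\ref{pAMposet0.1}. Recall that in a poset antimatroid $\flt(V,\leq)$ the paths are exactly the principal filters $\{\,y : y\geq x\,\}$ for $x$ ranging over the ground set; this is the basic fact I would invoke repeatedly. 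The hypothesis that no $i\in I$ has $N(i)=K$ is what guarantees that $\fos(i)\neq\varnothing$ for every $i$ (indeed if $N(i)\neq K$ there is $k\in K$ with $k\nsim i$, so $k\in\fos(i)$), and more importantly that the various principal filters computed in the $\FFF_i$-posets, once translated back by adding $\fos(i)$, are not already accounted for in $\FFF_*$ — this is where one must be careful.

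First I would handle $\FFF_*$: by Proposition~\ref{pAMposet0.0}, $\FFF_*=\flt(K\cup I,\prec)$, and since $\prec$ has height at most two with the elements of $I$ maximal and the elements of $K$ at the bottom, the principal filter of an element $i\in I$ is $\{i\}$ (as $i$ is maximal and isolated-above), giving exactly $P_1$; the principal filter of an element $k\in K$ is $\{k\}\cup\{i\in I: k\prec i\}=\{k\}\cup(N(k)\cap I)$, giving exactly $P_2$. So the paths of the poset antimatroid $(V,\FFF_*)$ are precisely $P_1\cup P_2$. Next, for a fixed $i\in I$, Proposition~\ref{pAMposet0.1} says $\FFF_i=\{\fos(i)\cup H: H\in\flt(\ufs(i),\prec),\ H\cap K\neq\varnothing\}$. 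A set $F=\fos(i)\cup H$ in $\FFF_i$ is a path of $(V,\FFF)$ iff it cannot be written as a union of two strictly smaller feasible sets. Since $\fos(i)$ is itself feasible (it is a $*$-feasible set: $N(\fos(i))\subseteq N(i)\subseteq K$), writing $F=\fos(i)\cup H$, the only way $F$ fails to decompose is that $H$ is a path of the poset antimatroid on $\ufs(i)$ that meets $K$ — and I must also check that no further decomposition $F=F_1\cup F_2$ with $F_1,F_2\in\FFF$ is possible, which is where the constraint $H\cap K\neq\varnothing$ and the height-two structure make $H$ forced to be the principal filter of its unique minimal element $k\in K\cap\ufs(i)$, namely $H=\{k\}\cup(N(k)\cap I\cap\ufs(i))=\{k\}\cup(N(k)\cap I\setminus\{i\})$. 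Translating, $F=\fos(i)\cup\{k\}\cup(N(k)\cap I\setminus\{i\})$ with $k\in N(i)$, which is exactly the generic member of $P_3$.

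Conversely, I must verify each of $P_1,P_2,P_3$ consists of genuine paths and that different triples $(i,k)$ do not collapse — e.g. that a set in $P_3$ is never accidentally in $\FFF_*$ (it is an $i$-feasible set with $i\in N(F)$, so it lies in $\FFF_i$, not $\FFF_*$, using that $\fos(i)$ contains a $k'\in K$ with $k'\nsim i$ so that $i\in N(F)$ after adding $k\in N(i)\cap F$) — and that it is indeed a path (its unique removable element is $k$, which I check directly: removing $k$ leaves $\fos(i)\cup(N(k)\cap I\setminus\{i\})$, still feasible since its neighborhood is contained in $N(i)\cup\{i\}\setminus\{k\}$... wait, one must confirm this is a clique, which it is as a subset of the clique $N(i)\cup\{i\}$ — here again $\fos(i)\neq\varnothing$ and the no-$N(i)=K$ hypothesis are used to keep things honest). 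The main obstacle I anticipate is the decomposition argument in the $P_3$ case: ruling out all ways of writing $\fos(i)\cup H$ as a union of two proper feasible subsets requires using both that removing any single vertex of $\fos(i)$ from $F$ destroys feasibility (from Lemma~\ref{unblIN}, since the result would have to be $i$-feasible but miss part of $\fos(i)$) and that $H$ itself, as a filter of $(\ufs(i),\prec)$ meeting $K$, is indecomposable exactly when it is a principal filter of a $K$-element. Assembling these observations and matching them against the explicit lists $P_1,P_2,P_3$ completes the proof.
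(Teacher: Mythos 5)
Your route---identifying the candidate paths as the principal filters of the poset antimatroids $(V,\FFF_*)$ and $(\ufs(i),\prec)$ and then translating back---is a reasonable repackaging of the paper's more direct argument (the paper simply checks indecomposability of each set in $P$ via Proposition~\ref{Nfclique} and then shows every feasible set is a union of sets in $P$). But there is a genuine gap, and it sits exactly where the hypothesis ``no $i\in I$ with $N(i)=K$'' does its real work. A path of $(V,\FFF)$ must be indecomposable with respect to \emph{all} of $\FFF$, whereas ``principal filter of $(K\cup I,\prec)$'' only gives indecomposability within $\FFF_*$. You flag this cross-family issue for $P_3$, but you never address it for $P_2$: a set $\{k\}\cup(N(k)\cap I)$ could a priori be written as $F_1\cup F_2$ where $F_1$ is an $i$-feasible set. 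Ruling this out is precisely where the hypothesis enters: a feasible proper subset $F_1\ni k$ missing some $i_0\in N(k)\cap I$ has $i_0\in N(F_1)$, forcing $F_1$ to be $i_0$-feasible, hence $\fos(i_0)\subseteq F_1\subseteq\{k\}\cup I$ by Lemma~\ref{unblIN}, hence $\fos(i_0)\cap K=\varnothing$, i.e.\ $N(i_0)=K$ --- excluded. Your proposal instead attributes the hypothesis to ensuring $\fos(i)\neq\varnothing$ and to keeping the families $\FFF_*$, $\FFF_i$ from colliding, so as written your argument for $P_2$ would go through even without the hypothesis, and that conclusion is false: for $K=\{k\}$, $I=\{i_1,i_2\}$ with both $i_1,i_2$ adjacent to $k$, the set $\{k,i_1,i_2\}$ is the principal filter of $k$ (indecomposable in $\FFF_*$) yet decomposes as $\{k,i_1\}\cup\{k,i_2\}$, both feasible.

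Two smaller points. First, your equality $\{k\}\cup(N(k)\cap I\cap\ufs(i))=\{k\}\cup(N(k)\cap I\setminus\{i\})$ is not literally true (neighbours $j$ of $k$ with $N(j)\not\subseteq N(i)$ lie in $\fos(i)$, not in $\ufs(i)$); the discrepancy is harmless only after taking the union with $\fos(i)$, so the identification of $P_3$ survives, but the intermediate claim should be corrected. Second, for the $P_3$ indecomposability you correctly name the two ingredients (Lemma~\ref{unblIN} for the $\fos(i)$ part, the clique condition of Proposition~\ref{Nfclique} for $N(k)\cap I\setminus\{i\}$), but the argument still needs to be run on an arbitrary feasible part $F_1$ containing $k$ rather than on single-vertex removals: ``removing one vertex destroys feasibility'' does not by itself exclude a decomposition into two parts each missing several vertices. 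The paper's one-line argument does exactly this: any feasible $F_1\ni k$ contained in the $P_3$ set must contain all of $\fos(i)$ and all of $N(k)\cap I\setminus\{i\}$, hence equals the whole set.
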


In the proposition above, the condition forbidding any $i$ in $I$ such that $N(i)=K$ is not very restrictive because if such an $i$ exists, then we change the partition $K\cup I$ into $(K\cup\{i\})\cup (I\setminus\{i\})$.
 
\begin{proof}
Every set in $P$ is feasible, because of Proposition~\ref{Nfclique}. Next we show that every feasible set in $P$ cannot be decomposed into the union of two proper feasible sets. This is trivial for the sets in $P_1$.  So suppose that $F=\{\{k\}\cup(N(k)\cap I)\}$ in $P_2$ is the union of two proper sets $F_1$ and $F_2$ with $k\in F_1$. Then $F_1$ is not feasible because of Proposition~\ref{Nfclique} and the assumptions which ensures that there is no $i$ in $I$ such that $N(i)=K$. Now suppose that $F=\{\fos(i)\cup\{k\}\cup (N(k)\cap I\setminus\{i\})$ in $P_3$ is the union of two proper sets $F_1$ and $F_2$ with $k\in F_1$. Then $F_1$ is not feasible because of Proposition~\ref{Nfclique} and $N(F)\cap I = \{i\}$ and $N(F)\cap K \subseteq N(i)$.

Second, we show that every $F$ in $\FFF$ is the union of some sets in $P$. If $F$ is a $*$-feasible set, we use the sets from $P_1$ and $P_2$. If $F$ is an $i$-feasible set, we use the sets from $P_3$ of the form $\{\fos(i)\cup \{k\}\cup (N(k)\cap I\setminus\{i\})$ with $k\in N(i)$ and some sets from $P_1$. By Corollary~\ref{partitionF} we are done.
\end{proof}

Figure~\ref{figPP1} illustrates Proposition~\ref{pathSplit}. Directly from Proposition~\ref{pathSplit}, we have the following corollary.
 \begin{figure}[h]
        \centering
\begin{tikzpicture}[scale=0.7]
\draw (3.5,0) node[] {{\color{truegray}$K$}};
\draw (3,3) node[] {{\color{truegray}$I$}};
 \draw[truegray] (1,0) ellipse (2cm and 1cm);
 \draw[truegray] (0.5,3) ellipse (2cm and 1cm);
 \draw (0,0) node[circle,draw,fill=black,thick,inner sep=1.75pt,label=below :{\footnotesize $1$}] (1)  {}
 (1,0.75) node[circle,draw,fill=black,thick,inner sep=1.75pt,label=below :{\footnotesize $2$}] (2)  {}
 (2,0) node[circle,draw,fill=black,thick,inner sep=1.75pt,label=below :{\footnotesize $3$}] (3)  {};
 \draw (-0.5,3) node[circle,draw,fill=white,thick,inner sep=1.75pt,label=above :{\footnotesize $4$}] (4)  {}
 (0.5,3) node[circle,draw,fill=white,thick,inner sep=1.75pt,label=above :{\footnotesize $5$}] (5)  {}
 (1.5,3) node[circle,draw,fill=white,thick,inner sep=1.75pt,label=above :{\footnotesize $6$}] (6)  {};
\draw (4)--(1)--(5)--(2)--(6)--(3);
\draw (1)--(2)--(3)--(1);
\end{tikzpicture}
\bigskip
\bigskip
        \begin{tikzpicture}[scale=0.7]
\draw 
      (-3,0) node[fill=white, inner sep=0.5pt] (4) {\footnotesize $\{4\}$}
      (0,0) node[fill=white, inner sep=0.5pt] (5) {\footnotesize $\{5\}$}
      (3,0) node[fill=white, inner sep=0.5pt] (6) {\footnotesize $\{6\}$}
      (-3,1.5) node[fill=white, inner sep=0.5pt] (145) {\footnotesize $\{1,4,5\}$}
      (0,1.5) node[fill=white, inner sep=0.5pt] (256) {\footnotesize $\{2,5,6\}$}
      (3,1.5) node[fill=white, inner sep=0.5pt] (36) {\footnotesize $\{3,6\}$}
      (-6,3) node[fill=white, inner sep=0.5pt] (1245) {\footnotesize $\{1,2,4,5\}$}
      (-2,3) node[fill=white, inner sep=0.5pt] (1345) {\footnotesize $\{1,3,4,5\}$}
      (2,3) node[fill=white, inner sep=0.5pt] (236) {\footnotesize $\{2,3,6\}$}
      (6,3) node[fill=white, inner sep=0.5pt] (1346) {\footnotesize $\{1,3,4,6\}$}
      (0,4.5) node[fill=white, inner sep=0.5pt] (12356) {\footnotesize $\{1,2,3,5,6\}$};
\draw (4)--(145)--(5)--(256)--(6)--(36);
\draw (1245)--(145)--(1345) (36)--(236) (36)--(1346)--(4);
\draw (256)--(12356)--(236);
\draw[rounded corners,blue!75] (-3.5,-0.5) rectangle (3.5,0.5);
\draw[rounded corners,blue!75] (-3.9,1) rectangle (3.7,2);
\draw[rounded corners,blue!75] (-7.15,2.5) rectangle (7.15,5);
\draw (4,-0.25) node[] {{\color{blue}$P_1$}};
\draw (4.2,1.25) node[] {{\color{blue}$P_2$}};
\draw (7.65,2.75) node[] {{\color{blue}$P_3$}};
\end{tikzpicture}
        \caption{A split graph and the path poset associated with its shelling antimatroid.}\label{figPP1}
\end{figure}
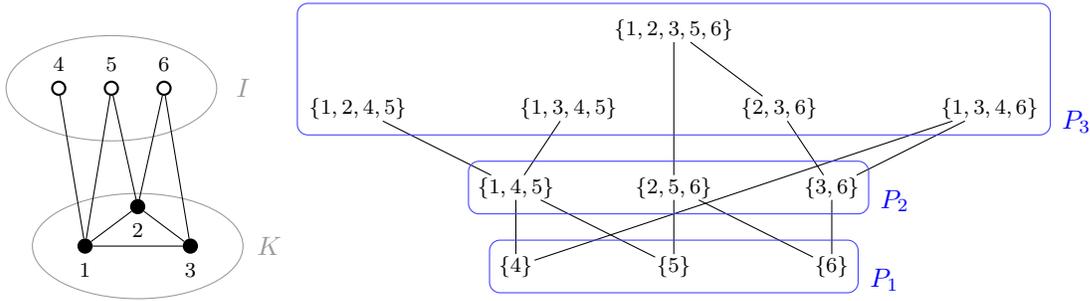

\begin{coroll}
Let $(V,\FFF)$ be a split graph shelling antimatroid built on a split graph $(K\cup I, E)$ without any vertex $i$ in $I$ such that $N(i)=K$. The number of paths in $(V,\FFF)$ is equal to the number of vertices plus the number of edges from $K$ to $I$.
\end{coroll}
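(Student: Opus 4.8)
The plan is to read the count straight off Proposition~\ref{pathSplit}: the set of paths equals $P=P_1\cup P_2\cup P_3$, so it suffices to show that this union is disjoint and to compute $|P_1|$, $|P_2|$, $|P_3|$. Writing $m$ for the number of edges joining $K$ and $I$, I expect to get $|P_1|=|I|$, $|P_2|=|K|$, $|P_3|=m$, hence $|P|=|I|+|K|+m=|V|+m$, which is the claim.

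The two easy counts are immediate: $i\mapsto\{i\}$ is a bijection $I\to P_1$, so $|P_1|=|I|$; and each member $\{k\}\cup(N(k)\cap I)$ of $P_2$ contains exactly one vertex of $K$ (namely $k$), so $k\mapsto\{k\}\cup(N(k)\cap I)$ is injective and $|P_2|=|K|$. For $P_3$, the point is that a path $F=\fos(i)\cup\{k\}\cup(N(k)\cap I\setminus\{i\})$ determines the pair $(i,k)$: it is an $i$-feasible set with $i\notin F$ (as in the proof of Proposition~\ref{pathSplit}), so by Corollary~\ref{Lem_i1i2} the vertex $i$ is the unique element of $N(F)\cap I$ and is recovered from $F$ alone; then $\fos(i)$ is determined, and since $k\sim i$ one has $F\cap K=(\fos(i)\cap K)\uplus\{k\}$, so $k$ is the unique vertex of $K$ in $F\setminus\fos(i)$. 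Thus $(i,k)\mapsto F$ is injective on $\{(i,k):i\in I,\ k\in N(i)\}$, and that index set is in bijection with the $K$--$I$ edges, so $|P_3|=m$.

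It remains to see that the union $P_1\cup P_2\cup P_3$ is disjoint. Every member of $P_1\cup P_2$ is a $*$-feasible set (its neighbourhood is contained in $K$), whereas every member of $P_3$ is $i$-feasible for some $i\in I$; since $\FFF_*$ is disjoint from each $\FFF_i$ by Corollary~\ref{partitionF}, we get $P_3\cap(P_1\cup P_2)=\varnothing$, and distinct $K$--$I$ edges give distinct members of $P_3$ by the injectivity shown above. Finally $P_1\cap P_2=\varnothing$, since a member of $P_1$ is a singleton contained in $I$ while every member of $P_2$ meets $K$. Hence $|P|=|P_1|+|P_2|+|P_3|=|I|+|K|+m=|V|+m$.

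The only step that is more than bookkeeping is the injectivity of the parametrization of $P_3$ by the $K$--$I$ edges, i.e.\ confirming that both endpoints $i$ and $k$ can be read off the path; I expect this to be the main (and only mild) obstacle, and the recovery argument above handles it.
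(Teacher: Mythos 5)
Your proof is correct and is exactly the argument the paper intends: the paper states this corollary with no proof beyond ``directly from Proposition~\ref{pathSplit}'', and your counting of $|P_1|=|I|$, $|P_2|=|K|$, $|P_3|=m$ together with the disjointness of the union (via Corollary~\ref{partitionF} and the recovery of $(i,k)$ from a member of $P_3$ using Corollary~\ref{Lem_i1i2}) fills in the omitted details faithfully.
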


\section{Finding a maximum weight feasible set}

\subsection*{Hardness result}
Many classical problems in combinatorial optimization have the following form. For a set system $(V,\FFF)$  and for a function $w:V \rightarrow \R$, find a set $F$ of $\FFF$ maximizing the value of 
\[w(F)=\sum_{f\in F}w(f).\]
For instance, the problem is known to be efficiently solvable for the independent sets of matroids (see \citet{Oxley_2006}) using the greedy algorithm. Since antimatroids capture a combinatorial abstraction of convexity in the same way as matroids capture linear dependence, we investigate the optimization of linear objective functions for antimatroids. It is not known whether a general efficient algorithm exists in the case of antimatroids. Of course, the hardness of finding a maximum weight feasible set depends on the way we encode the antimatroids (see~\citet{Eppstein_13} and~\citet{Enright2000} for more information). If one is given all the feasible sets and a real weight for each element, it is trivial to find a maximum weight feasible set in time polynomial in $|\FFF|$ (in the next section we use a better definition of the size of an antimatroid).

We investigate what happens if we choose a more compact way to encode the information. We use now the path poset to describe an antimatroid. However, optimization on antimatroids given in this compact way is hard as Theorem~\ref{optiAM} shows. We first recall the following theorem due to \citet{Hastad96}, initially stated in terms of a maximum clique.

\begin{thm} \label{CovComp}
There can be no polynomial time algorithm that approximates the problem of finding the maximum size of an independent set in a graph on $n$ vertices to within a factor better than $O(n^{1-\varepsilon})$, for any $\varepsilon> 0$, unless $\mathcal{P} = \mathcal{NP}$.
\end{thm}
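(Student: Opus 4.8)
The plan is to observe first that this is not a statement I would prove from scratch: it is Håstad's optimal inapproximability theorem for maximum clique, restated for independent sets, so the honest ``proof'' is a citation to~\citet{Hastad96}. The only self-contained step worth spelling out is the reduction between the two problems: for a graph $G=(V,E)$, a set $S\subseteq V$ is independent in $G$ if and only if $S$ is a clique in the complement graph $\overline{G}$, which has the same $n$ vertices and is computable in polynomial time. Consequently a polynomial-time $f(n)$-approximation for maximum independent set on $n$-vertex graphs would give a polynomial-time $f(n)$-approximation for maximum clique, and conversely, so it suffices to recall why maximum clique has no polynomial-time $O(n^{1-\varepsilon})$-approximation unless $\mathcal{P}=\mathcal{NP}$.

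For that I would only sketch the classical PCP route. Starting from the PCP theorem, $\mathcal{NP}=\mathrm{PCP}(O(\log n),O(1))$, one passes to the FGLSS graph of a PCP verifier: its vertices are the accepting local views, adjacency encodes consistency, and the clique number is proportional to the acceptance probability of the verifier, so the verifier's completeness-versus-soundness gap becomes a multiplicative gap in the clique number. To turn this into an $n^{1-\varepsilon}$ gap one needs a verifier with polynomially small soundness error and only a constant number of free bits; this is produced by taking a two-prover one-round proof system for an $\mathcal{NP}$-complete problem, boosting its soundness with Raz's parallel repetition theorem, and testing the provers' answers with a long-code test analysed by Fourier methods in the style of Håstad. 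The composition yields, for every $\varepsilon>0$, a polynomial-time map sending an $\mathcal{NP}$ instance to a graph whose clique number is at least $n^{1-\varepsilon}$ in the YES case and at most $n^{\varepsilon}$ in the NO case; an approximation algorithm beating the factor $n^{1-2\varepsilon}$ would then decide the instance in polynomial time, forcing $\mathcal{P}=\mathcal{NP}$.

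I expect the genuine obstacle to be everything in that second paragraph beyond the FGLSS reduction, namely the construction and the Fourier-analytic soundness analysis of the near-optimal constant-free-bit PCP, together with the parallel repetition theorem underpinning it; this is the hard technical core of~\citet{Hastad96} and its prerequisites, and I would not reproduce it here. For the purposes of this paper the theorem is used only as a black box, feeding the reduction in the subsequent hardness result (Theorem~\ref{optiAM}) for optimisation over antimatroids presented compactly by their path posets.
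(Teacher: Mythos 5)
Your proposal matches the paper exactly in spirit: the paper states Theorem~\ref{CovComp} without proof, attributing it to H\aa stad and noting only that it was ``initially stated in terms of a maximum clique,'' which is precisely the complement-graph translation you make explicit. Citing the result as a black box and recording the trivial clique/independent-set equivalence is all that is needed here; the PCP machinery you sketch is correctly identified as the part that neither you nor the paper should reproduce.
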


For antimatroids given in the form of a path poset, there is a \NPclass -completeness reduction from the maximum independent set problem to the maximum weight feasible set. The reduction given below is an adaptation of a result due to \citet{Eppstein_07}.

\begin{thm}  \label{optiAM}
The problem of finding a maximum weight feasible set in an antimatroid encoded in the form of its path poset is not approximable in polynomial time within a factor better than $O(N^{\frac{1}{2} - \varepsilon})$ for any $\varepsilon > 0$, where $N$ is the number of elements in the path poset, unless $\mathcal{P} = \mathcal{NP}$.
\end{thm}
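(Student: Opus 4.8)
The plan is to build, from a graph $H = (W, E_H)$ on $n = |W|$ vertices, an antimatroid presented by an explicit path poset of size $N = O(n^2)$, such that the maximum weight of a feasible set recovers (a scaled version of) the maximum independent set of $H$. Since a maximum independent set cannot be approximated within $O(n^{1-\varepsilon})$ by Theorem~\ref{CovComp}, and $n = \Theta(\sqrt{N})$, the weighted feasibility optimum inherits inapproximability within $O(N^{\frac{1}{2}-\varepsilon'})$ for every $\varepsilon'>0$. The construction follows Eppstein's gadget: introduce one ``vertex element'' $x_v$ for each $v \in W$, and one ``edge element'' $y_{uv}$ for each edge $\{u,v\} \in E_H$, plus possibly a single auxiliary base element. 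We give each $y_{uv}$ a huge negative weight (say $-M$ with $M$ larger than $n$), each $x_v$ weight $1$, and the auxiliary element weight $0$; then force, via the path structure, that any feasible set containing both $x_u$ and $x_v$ for an edge $\{u,v\}$ must also contain $y_{uv}$. The net effect is that a feasible set of positive weight corresponds exactly to an independent set of $H$ (choosing its $x_v$'s), and the optimum weight equals the independence number.

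The key steps, in order. First, describe the ground set $V = \{b\} \cup \{x_v : v\in W\} \cup \{y_e : e \in E_H\}$ and the weights as above. Second, specify the path poset $P$ directly (rather than the feasible family): take as paths the singletons/base chains that make each $x_v$ reachable on its own (e.g. $\{b\}$ and $\{b, x_v\}$), and for each edge $e=\{u,v\}$ a path of the form $\{b\} \cup \{x_u, x_v, y_e\}$ whose unique ``removable'' element is one of the $x$'s — this is the gadget encoding ``$x_u$ and $x_v$ together drag in $y_e$''. Third, verify that the set system $\FFF$ generated by closing $P$ under union and then checking accessibility is indeed an antimatroid: union-closure is automatic from generating by paths, and accessibility has to be checked — each generated set must admit an element whose deletion stays in $\FFF$; here one peels off $y$'s first, then leaves; the base $b$ (or the device of always including $b$) guarantees $\varnothing$ is reached, and $V \in \FFF$ since $V$ is the union of all paths. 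Fourth, establish the correspondence: if $A \subseteq W$ is independent then $\{b\}\cup\{x_v : v\in A\}$ is feasible (it is the union of the paths $\{b, x_v\}$, and no forcing path fires because no edge lies inside $A$), with weight $|A|$; conversely, any feasible set $F$ of weight $> 0$ cannot contain any $y_e$ (weight $-M < -(n-1) \le $ total positive mass), hence by the forcing paths cannot contain both endpoints of any edge, so $\{v : x_v \in F\}$ is independent of size $w(F)$. Thus $\mathrm{OPT}_{\mathrm{weight}} = \alpha(H)$. Fifth, do the parameter bookkeeping: $N = |P| = O(n + |E_H|) = O(n^2)$, so $n = \Omega(\sqrt N)$; an $O(N^{\frac12-\varepsilon})$-approximation for weighted feasibility would give an $O(n^{1-2\varepsilon})$-approximation for $\alpha(H)$, contradicting Theorem~\ref{CovComp}; \NPclass-completeness of the decision version follows from the same reduction (or is inherited, with negative weights allowed, from the trivial membership in \NPclass{} once a short certificate — a shelling of a claimed optimal feasible set — is supplied).

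The main obstacle I expect is not the inapproximability arithmetic but pinning down the path poset so that the resulting $\FFF$ is genuinely an antimatroid and the forcing is \emph{exactly} as intended. Concretely: a path in an antimatroid is a feasible set with a unique element whose removal is feasible, so I must choose each edge-gadget path and the ambient singleton paths so that (i) these sets really are the minimal join-irreducibles of the lattice of feasible sets (no unintended paths appear), and (ii) no unintended feasible set of positive weight is created by taking unions of gadget paths — e.g. unioning two edge paths sharing a vertex must not accidentally produce an $x$-only set. Handling negative weights also needs a remark, since the statement speaks of a weight function $w : V \to \R$: with arbitrary real weights this is legitimate, but if one wanted nonnegative weights a complementation trick (optimize over convex sets / shift weights) would be needed; I would simply note that Theorem~\ref{optiAM} is stated for real weights and the reduction uses them. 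Once the gadget is fixed, everything else is routine verification against axioms \eqref{VinF}, \eqref{Ustabl}, \eqref{accec}.
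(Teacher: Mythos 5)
There is a genuine gap, and it sits exactly where you suspected it would: the forcing gadget. You want the path poset to enforce the conjunctive implication ``if $x_u\in F$ and $x_v\in F$ then $y_{uv}\in F$.'' No antimatroid can encode this. By \eqref{Ustabl} the feasible family is union-closed, so if some feasible set contains $x_u$ without $y_{uv}$ and some feasible set contains $x_v$ without $y_{uv}$, their union contains both endpoints and still omits $y_{uv}$; the implication can only hold if it degenerates to a single-element implication ($x_u$ alone, or $x_v$ alone, already forces $y_{uv}$). Concretely, with your proposed paths the set $\{b\}\cup\{b,x_u\}\cup\{b,x_v\}=\{b,x_u,x_v\}$ is feasible and contains no $y$-element, so $\{b\}\cup\{x_v: v\in W\}$ is always feasible with weight $n$ regardless of the structure of $H$. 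The claimed identity $\mathrm{OPT}=\alpha(H)$ therefore fails, and the reduction collapses. This is not a matter of tuning the gadget: the obstruction is structural.

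The paper's reduction (following Eppstein) avoids this by inverting the roles of vertices and edges. The ground set is $V\uplus E$, and a set $F$ is feasible when every edge element in $F$ has at least one of its endpoints in $F$ --- a disjunctive condition triggered by a \emph{single} element, which is preserved under unions and gives accessibility (peel off edge elements first, then vertices). The weights are $w(v)=-d(v)+\delta$ for vertices and $w(e)=+1$ for edges, with $\delta=0.1$. An independent set $S$ together with all edges incident to $S$ is feasible of weight exactly $\delta|S|$, and from an arbitrary feasible set $F$ one extracts in polynomial time a feasible $F'$ with $V\cap F'$ independent and $\delta|V\cap F'|\ge w(F')\ge w(F)$, so the optimum equals $\delta\,\alpha(G)$ and any approximate solution yields a comparably approximate independent set. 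Your parameter bookkeeping ($N=|W|+2|E_H|=O(n^2)$, hence $O(N^{1/2-\varepsilon})$-approximability would contradict Theorem~\ref{CovComp}) matches the paper's and is fine; it is only the gadget, and consequently the correspondence between feasible sets and independent sets, that needs to be replaced by the paper's version. One further point your sketch glosses over: because weights are negative on some elements, a multiplicative approximation guarantee is only meaningful together with an explicit polynomial-time extraction of an independent set of size at least $w(F)\delta^{-1}$ from \emph{any} feasible set $F$ the algorithm might return, which is exactly the step the paper spells out.
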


\begin{proof}
Given any graph $G=(V,E)$ on which we want to find an independent set of maximum size, we define an antimatroid $(A,\FFF)$ by letting $A=V\uplus E$ and defining a feasible set (an element of $\FFF$) as any subset $F$ of $A$ such that $\{v_1,v_2\}\in E\cap F$  then $v_1 \in F$ or $v_2 \in F$. Remark that $(A,\FFF)$ is indeed an antimatroid because it satisfies~\eqref{VinF}, \eqref{Ustabl} and \eqref{accec}.

The path poset of this antimatroid is composed of sets $\{v\}$ for each vertex in $V$ and sets $\{v,e\}$ for each edge $e\in E$ such that $v\in e$. Let $d(v)$ denote the degree of the vertex $v$ and $\delta=0.1$. We define a weight function: $w:A\rightarrow \R$ by setting
\[
w(x)=\left\{
  \begin{array}{cl}
    -d(x)+\delta & \mbox{ if } x\in V \\
    1 & \mbox{ if } x\in E\\
  \end{array}
\right. .
\]

We first show that if $F$ is a feasible set with weight $w(F)$, then we can construct an independent set of $G$ of size at least $w(F)\delta^{-1}$ in polynomial time. To that end, we define a feasible set $F'\subseteq F$ as follows. If $V\cap F$ corresponds to an independent set of vertices in the graph $G$, then $F'= F$. If it is not the case, we select a pair $\{u,v\}\subseteq F$ such that $u\sim v$ in the graph, and remove the element $u$ from $F$. If there is an element $\{u,a\}\in F\cap E$, with $a\notin F$, we also remove $\{u,a\}$ from $F$ (to maintain the feasibility of the set). We repeat this operation until the remaining vertices in the set $F'$ form an independent set in the graph. The remaining elements then form the set $F'$. It is easy to check that $F'$ is always feasible. By the definition of the function $w$, we have the following inequalities, 
\begin{align*}
 w(F)\leq w(F') &\leq  \sum_{v\in V\cap F'}(-d(v)+\delta) + \sum_{e\in E\cap F'} 1 \\
            &\leq  \sum_{v\in V\cap F'}(-d(v)+\delta) + \sum_{v\in V\cap F'} d(v)  = \delta |V\cap F'|.
\end{align*}
So we have an independent set $V\cap F'$ that we can construct in polynomial time with size greater than $w(F)\delta^{-1}$.

Now, let $N$ be the number of paths of $(A,\FFF)$, and suppose we have a $f(N)$-approximation algorithm to find a maximum weight feasible set,  \textit{i.e.} we have an algorithm that returns a feasible set with weight at least $f(N)^{-1}$ times the weight of a maximum  weight feasible set. Assume that  $f(N) \leq O(N^{\frac{1}{2} - \varepsilon})$ for some $0<\varepsilon <1$. We know that $N=|V|+2|E|$, so
\begin{align*}
f(N)\leq O((|V|+2|E|)^{\frac{1}{2} -\varepsilon})\leq O((|V|+|V|^{2})^{\frac{1}{2} -\varepsilon}) 
\leq O(|V|^{1 -\varepsilon'}),
\end{align*}
for a $\varepsilon'\in ]0,1[$. So we have
\[\frac{1}{f(N)} \geq \frac{1}{O(n^{1 - \varepsilon'})}, \]
and we obtain a feasible set with weight at least $\frac{1}{O(n^{1 - \varepsilon'})}$ times the weight $w^*$ of a maximum weight feasible set. By the previous statement, we build an independent set with size at least $(w^{*} / O(n^{1 - \varepsilon'})) \delta^{-1}$, so at least $1 / O(n^{1-\varepsilon'}) $ the size of a maximum independent set, and this contradicts Theorem~\ref{CovComp}. So $f(N) \leq O(N^{\frac{1}{2} - \varepsilon'})$ is impossible  unless $\mathcal{P} = \mathcal{NP}$.
\end{proof}

Moreover, the above theorem remains true also for a subclass of antimatroids (those built in the proof).

\subsection{Optimization on split graph shelling antimatroids}
 
We will now prove that for a weighted split graph shelling antimatroid, the problem of finding a maximum weight feasible set can be done in polynomial time in the size of the input even if the form of the input considered is a more compact representation than the path poset. We use the split graph itself to encode all the information about the feasible sets.

In the case of the poset antimatroids, the optimization problem is solved using the solution to the \emph{maximum closure problem}:
\begin{pbl}\label{MCP}
Given a poset $(V,\leq)$ and a weight function $w: V\rightarrow \R$, find a filter $F$ that maximizes 
\[w(F)=\sum_{f\in F}w(f).\]
\end{pbl}
\citet{Picard_1976} designs a polynomial algorithm to solve Problem~\ref{MCP}, which calls as a subroutine a maximum flow algorithm (\textit{e.g.} \citet{Goldberg1988}). Picard's algorithm runs in $O(mn \log(\frac{n^{2}}{m}))$ time, where $n$ is the number of vertices in $V$ and $m$ the number of cover relations in the poset. Taking advantage of this result, we have the following theorem.

\begin{thm} Giving a split graph $G$ (as a list of vertices and a list of edges), the problem of finding a maximum weight feasible set in the split graph shelling antimatroid defined on $G$ can be done in polynomial time.
\end{thm}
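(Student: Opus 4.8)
The plan is to use the structural decomposition from Section~2 to reduce the optimization to a small number of maximum closure problems (Problem~\ref{MCP}), each solvable by Picard's algorithm (\citet{Picard_1976}). First I would invoke Corollary~\ref{partitionF}: since $\FFF = \FFF_{*} \uplus \FFF_{i_1} \uplus \cdots \uplus \FFF_{i_{|I|}}$, a maximum weight feasible set is simply the best among the $|I|+1 \le |V|+1$ sets obtained by maximizing $w$ over $\FFF_{*}$ and over each $\FFF_{i}$ in turn, so it is enough to handle each piece in polynomial time.

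For $\FFF_{*}$, Proposition~\ref{pAMposet0.0} identifies it with $\flt(K\cup I,\prec)$, where $u \prec v$ means $u\in K$, $v\in I$, $u\sim v$. Maximizing $w$ over $\FFF_{*}$ is then exactly Problem~\ref{MCP} for the poset $(K\cup I,\prec)$; its relations are all cover relations and are just the $K$--$I$ edges of $G$, so the poset is read off the input in linear time and Picard's algorithm returns the optimum in $O(mn\log(n^{2}/m))$ time with $n=|V|$.

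For a fixed $i\in I$, Proposition~\ref{pAMposet0.1} says every $F\in\FFF_{i}$ is $F=\fos(i)\cup H$ with $H$ a filter of $(\ufs(i),\prec)$ meeting $K$, and since $H\subseteq\ufs(i)\subseteq\fos(i)^{\complement}$ this union is disjoint, giving $w(F)=w(\fos(i))+w(H)$ with $w(\fos(i))$ constant. So I would maximize $w(H)$ over filters $H$ of $(\ufs(i),\prec)$ via one more call to Picard's algorithm and then add the constant $w(\fos(i))$. The one delicate point --- the step I expect to require the most care --- is the side constraint $H\cap K\neq\varnothing$ in Proposition~\ref{pAMposet0.1}: I would argue it may be dropped. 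By the computation in the proof of Proposition~\ref{pAMposet0.1}, for \emph{any} filter $H$ of $(\ufs(i),\prec)$ we have $N(\fos(i)\cup H)\subseteq N(i)\cup\{i\}$, a clique, so $\fos(i)\cup H$ is feasible by Proposition~\ref{Nfclique}; hence $w(\fos(i))+\max\{\,w(H):H\in\flt(\ufs(i),\prec)\,\}$ is always the weight of a genuine feasible set while still dominating $\max_{F\in\FFF_{i}}w(F)$. Thus the unconstrained maximum closure on $(\ufs(i),\prec)$ gives a value sandwiched between $\max_{F\in\FFF_{i}}w(F)$ and $\max_{F\in\FFF}w(F)$, which suffices. (Alternatively one could keep the constraint and, for each $k\in N(i)$, solve one maximum closure on $(\ufs(i),\prec)$ with the up-set of $k$ forced in by assigning it an enormous weight; this still uses only polynomially many calls.)

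To finish, I would note that $\fos(i)$, $\ufs(i)$, and the posets $(K\cup I,\prec)$ and $(\ufs(i),\prec)$ are all constructible in polynomial time from the lists of vertices and edges, then run Picard's algorithm once for $\FFF_{*}$ and once for each $i\in I$ --- that is, $O(|V|)$ maximum-closure computations, each polynomial --- correct the $i$th value by $w(\fos(i))$, and output the feasible set attaining the overall maximum, recovered immediately from the corresponding optimal filter. The total running time is polynomial in the size of $G$. No single step is hard once Section~2 is in place; the only thing that needs watching is the reduction of the $\FFF_{i}$ pieces, specifically the verification that dropping the constraint $H\cap K\neq\varnothing$ affects neither correctness nor the feasibility of the returned set.
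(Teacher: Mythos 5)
Your proof is correct and follows essentially the same route as the paper: decompose via Corollary~\ref{partitionF}, solve one maximum closure problem on $(K\cup I,\prec)$ for $\FFF_{*}$ (Proposition~\ref{pAMposet0.0}) and one on $(\ufs(i),\prec)$ for each $i\in I$ (Proposition~\ref{pAMposet0.1}), and return the best of the $|I|+1$ candidates. In fact you are slightly more careful than the paper, which silently ignores the side constraint $H\cap K\neq\varnothing$; your observation that dropping it still yields a genuine feasible set (so the computed value is sandwiched between $\max_{F\in\FFF_i}w(F)$ and $\max_{F\in\FFF}w(F)$) correctly closes that small gap.
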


\begin{proof}
We recall that for every split graph shelling antimatroid we introduce a unique poset with relation~$\prec$ (see just before Proposition~\ref{pAMposet0.0}) . The construction of this poset combined to Corollary~\ref{partitionF} and Propositions~\ref{pAMposet0.0} and~\ref{pAMposet0.1} allows us to decompose the problem of finding a maximum feasible set in a split graph antimatroid into several maximum closure problems. Indeed, we first solve the maximum closure problem for $(V,\prec)$, yielding a $*$-feasible set with maximum weight among all the $*$-feasible sets. Then for each $i$ in $I$, we solve the maximum closure problem for $(\ufs(i),\prec)$, yielding a set $S$ such that $S \cup \fos(i)$ is an $i$-feasible set that have maximum weight among all  $i$-feasible sets. The algorithm outputs the feasible set found with maximum weight.
\end{proof}

So suppose that we have a procedure to find a filter in a poset $(V,\leq)$ of maximum weight (given by a function $w$) called $\operatorname{MaxClo}(V,\leq,w)$. In a split graph $(K\cup I, E)$, we look at the element $i$ in $I$ that maximizes the weight of $\fos(i) \cup \operatorname{MaxClo}(\ufs(i),\prec,w)$, we then compare the result with the weight of $\operatorname{MaxClo}(K\cup I,\prec,w)$ and keep the maximum. The time complexity of the algorithm is $O((|I|(|E||K|+|E||I|) \log(\frac{(|K|+|I|)^{2}}{|E|}))$ due to the complexity of $\operatorname{MaxClo}(V,\leq,w)$. Note that if we use a procedure to find a filter in a poset $(V,\leq)$ of minimum weight (given by a function $w$), with very little modifications, our algorithm can be used to return a feasible set of minimum weight.

\section{Free sets and circuits of the split graph shelling antimatroids}

In this last section, the term ``path" takes only its graph-theoretical meaning, while ``circuit" refers to the antimatroidal concept. Our aim is to characterize in simple terms the ``circuits" and ``free sets" of a split graph shelling antimatroid.  Let us first recall some definitions, for a given antimatroid $(V,\FFF)$. The \emph{trace} of $(V,\FFF)$ on a subset $X$ of $V$ is
\[ \tr(\FFF,X) = \{F \cap X : F \in \FFF\}. \]
A subset $X$ of $V$ is \emph{free} if $\tr(\FFF,X) = 2^X$. A \emph{circuit} is a minimal non free subset of $V$. An equivalent characterization reads (for a proof, see for instance~\citet{Korte_Lovasz_Schrader_1991}): a subset $C$ of $V$ is a circuit if and only if $\tr(\FFF,C) = 2^C \setminus \{\{r\}\}$ for some $r$ in $C$ (this element $r$ is unique) .  The element $r$ is then the \emph{root} of $C$, and the pair $(C\setminus\{r\},r)$ is a \emph{rooted circuit}. \citet{Dietrich87} shows that the collection of all rooted circuits determines the initial antimatroid, but the collection of circuits themselves does not always share this property.  She even shows that the collection of `critical' rooted circuits determines the antimatroid, where a rooted circuit $(C\setminus\{r\},r)$ is \emph{critical} when there is no rooted circuit $(D\setminus\{r\},r)$ with the same root $r$ such that the largest feasible set disjoint from $D$ strictly includes the largest feasible set disjoint from $C$.  For a recent reference, see \citet{Nakamura_2013}.

Let $G=(V,E)$ be a chordal graph.  It is known that the rooted circuits of its vertex shelling antimatroid admit the following simple description: a pair $(C,r)$ is a rooted circuit if $C$ consists of two distinct vertices $u$, $v$ such that $r$ is an internal vertex on some chordless path joining $u$ and $v$ (this follows immediately from Corollary~3.4 in \citet{Farber_87}).  Moreover, the circuit $(C,r)$ is critical if and only if the path has exactly three vertices.  For the particular case of split graphs we now provide more efficient characterizations of (critical) circuits, and then of free sets.

\begin{prop}\label{prop_circuits}
Let $(V,\FFF)$ be the vertex shelling antimatroid of the split graph $(K \cup I,E)$.  Set
\begin{align*}
C_1 =& \{ (\{i,j\},k) : k \in K, i,j \in N(k) \cap I \};\\
C_2 =& \{ (\{i,l\},k) : k \in K, i \in N(k) \cap I, 
l \in (N(k) \cap K) \setminus N(i) \};\\
C_3 =& \{ (\{i,j\},k) : k \in K, i \in N(k) \cap I, j \in I \setminus N(k) \text{ and }\\
  & \qquad  \exists m \in K \text{ with } i \not\sim m, j \sim m \}.
\end{align*}
Then the collection of rooted circuits of $(V,\FFF)$ equals $C_1 \cup C_2 \cup C_3$.  Moreover, the collection  of critical rooted circuits equals  $C_1 \cup C_2$.
\end{prop}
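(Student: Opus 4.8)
The plan is to specialize the description of (critical) rooted circuits of chordal graph shelling antimatroids recalled just above the statement: $(\{u,v\},r)$ is a rooted circuit if and only if $r$ is an internal vertex of some chordless path of $G$ joining $u$ and $v$, and such a circuit is critical if and only if one may choose that path to have exactly three vertices. So the whole argument reduces to understanding the chordless paths of a split graph $(K\cup I,E)$, and matching the resulting rooted circuits with $C_1$, $C_2$, $C_3$.

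First I would record three structural facts about a chordless path $P$ of a split graph, each a one-line consequence of $K$ being a clique and $I$ being a stable set: no two consecutive vertices of $P$ both lie in $I$; $P$ has at most two vertices in $K$, and if it has two they are consecutive on $P$ (otherwise the clique edge between them is a chord); every internal vertex of $P$ lies in $K$ (its two path-neighbours would otherwise be two distinct, hence adjacent, vertices of $K$, giving a chord). In particular the root $r$ of every rooted circuit lies in $K$.

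Next, fix a rooted circuit $(\{u,v\},r)$ and a witnessing chordless path $P$ with $r$ internal. Using the facts above I would show $P$ has one of exactly three shapes. If $r$ is the only vertex of $K$ on $P$, then both path-neighbours of $r$ lie in $I$ and $P$ cannot be longer (a further vertex would be a third $I$-vertex consecutive to one of them), so $P = i\,r\,j$ with $i,j\in N(r)\cap I$: this gives precisely the rooted circuits of $C_1$. Otherwise $P$ has a second $K$-vertex $s$, necessarily path-adjacent to $r$; traversing $P$ so that $s$ follows $r$, the ``no two consecutive $I$-vertices'' rule forces $r$ to be the second vertex of $P$, preceded by a single $I$-vertex $i$, and at most one vertex $j\in I$ to follow $s$. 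Hence either $P = i\,r\,s$, with $i\in N(r)\cap I$ and $s\in(N(r)\cap K)\setminus N(i)$ (chordlessness being exactly $i\nsim s$), giving $C_2$; or $P = i\,r\,s\,j$ with $i\in N(r)\cap I$, $j\in I$, $s\in K$, $s\sim j$, and chordlessness amounting to $i\nsim s$ and $r\nsim j$ (so $j\notin N(r)$), which is exactly the defining condition of $C_3$ with $m=s$. Conversely, for each triple listed in $C_1$, $C_2$, $C_3$ the corresponding path ($i\,k\,j$, $i\,k\,l$, respectively $i\,k\,m\,j$) is immediately checked to be chordless with $k$ internal, so each such pair is a rooted circuit. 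This establishes the first assertion. For criticality: a rooted circuit $(\{u,v\},r)$ is critical iff it is witnessed by a chordless path on three vertices, which can only be $u\,r\,v$, i.e. iff $u,v\in N(r)$ and $u\nsim v$. The witnessing paths of $C_1$ and $C_2$ already have three vertices, so those circuits are critical; for $C_3$ one has $j\notin N(k)$, so no three-vertex chordless path joins $i$ and $j$ through $k$, and the circuit is not critical. Hence the critical rooted circuits are exactly $C_1\cup C_2$.

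The main obstacle is the middle step: proving that every chordless path of a split graph, once a fixed interior vertex is designated as root, collapses to one of the three normal forms. Everything else — the converse direction and the criticality dichotomy — is a routine verification once the structural lemmas on chordless paths are in place.
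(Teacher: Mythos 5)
Your proof is correct and follows essentially the same route as the paper's: both reduce the statement to the characterization of (critical) rooted circuits of chordal graph shelling antimatroids recalled just before the proposition, and then classify the chordless paths of a split graph (at most four vertices, all internal vertices in $K$, hence the three normal forms matching $C_1$, $C_2$, $C_3$, with the three-vertex paths accounting for criticality). The paper states these structural facts about chordless paths without proof; you supply the justifications, but the argument is the same.
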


\begin{proof}
Notice that any chordless path in a split graph $(K \cup I, E)$  has at most four vertices.  Moreover, if it has three vertices, the internal vertex is in $K$ and at least one extremity is in $I$.  If it has four vertices, the internal vertices are in $K$ and the extremities in $I$.  
The result then follows from the characterization of the circuits of the shelling antimatroid of a chordal graph (which we recall just before the statement): the rooted circuits forming $C_1$ and $C_2$ come from paths with three vertices, those forming $C_3$ come from paths with four vertices.
\end{proof}

\begin{prop}\label{prop_free_sets}
Let $G=(K \cup I,E)$ be a split graph with $L$ and $J$ (possibly empty) subsets of respectively $K$ and $I$.  Then $L \cup J$ is free in the vertex shelling antimatroid of $G$ if and only if either there is no edge between $L$ and $J$, or there exists some vertex $h$ in $J$ such that $L \subseteq N(h)$ and $N(J \setminus\{h\}) \subseteq N(h) \setminus L$.
\end{prop}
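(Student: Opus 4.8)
The plan is to characterize freeness directly from Proposition~\ref{Nfclique}, since a set $X = L \cup J$ is free exactly when every subset of $X$ can be realized as $F \cap X$ for some feasible $F$. First I would recast the condition: $X$ is free iff for every $Y \subseteq X$ there is a feasible set $F$ with $F \cap X = Y$; the natural candidate is to try $F = Y$ itself (enlarged if necessary), so the question becomes roughly ``which subsets of $X$ are feasible, and when do the non-feasible ones fail to be repairable outside $X$''. By Proposition~\ref{Nfclique}, a subset $Y \subseteq K \cup I$ is feasible iff $N(Y)$ is a clique; since $K$ is a clique and $I$ is independent, $N(Y)$ fails to be a clique precisely when it contains two non-adjacent vertices, and by Corollary~\ref{Lem_i1i2} this forces at least one of them to lie in $I$ and in fact $|N(Y) \cap I| \le 1$ is the obstruction. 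So I would organize the argument around how many vertices of $J$ have a neighbour inside the chosen subset $Y \cap L$.

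For the two directions: \emph{(Sufficiency.)} If there is no edge between $L$ and $J$, then for any $Y \subseteq X$ we have $N_G(Y \cap L) \cap J = \varnothing$ and $N_G(Y \cap J) \cap L = \varnothing$, and one checks that $Y$ together with possibly the whole clique-side neighbourhood is feasible; more carefully, I would exhibit for each $Y$ an explicit feasible superset meeting $X$ in exactly $Y$ (adding only vertices of $K \setminus L$ that are common neighbours, which never disturbs the clique condition). If instead such an $h \in J$ exists with $L \subseteq N(h)$ and $N(J \setminus \{h\}) \subseteq N(h) \setminus L$, then for any $Y \subseteq X$: if $h \in Y$, the set $N(Y)$ has all of its $I$-part equal to $\{h\}$ plus vertices whose $K$-neighbourhoods sit inside $N(h)$, which is a clique, so $Y$ is feasible; if $h \notin Y$, then $Y \cap J \subseteq J \setminus \{h\}$ has $N(Y \cap J) \subseteq N(h) \setminus L$, disjoint from $L \supseteq Y \cap L$, and again one can complete $Y$ to a feasible set using the isolating element or vertices of $K \setminus L$. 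In each case produce the witness feasible set and invoke Proposition~\ref{Nfclique}.

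\emph{(Necessity.)} Suppose $X = L \cup J$ is free but there \emph{is} an edge between $L$ and $J$. I would show some specific subset $Y \subseteq X$ cannot be realized. Pick $\ell \in L$, $h \in J$ with $\ell \sim h$. Freeness applied to $Y = \{\ell, h\}$ already forces, via Proposition~\ref{Nfclique} and Corollary~\ref{Lem_i1i2}, that the realizing feasible set's neighbourhood has $h$ as its unique $I$-vertex — so the edge must ``point to'' a single vertex $h$ of $J$, giving the first part $L \subseteq N(h)$ after ranging $\ell$ over $L$ (any $\ell' \in L$ not adjacent to $h$ would make $\{\ell', h\}$ or a larger subset non-realizable, since one cannot simultaneously have $h$ and a neighbour of $h$ both in $F$ while $\ell'$ fails to be adjacent). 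Then the condition $N(J \setminus \{h\}) \subseteq N(h) \setminus L$ comes from examining subsets $Y$ containing $h$ together with some $j \in J \setminus \{h\}$ (and, if needed, a witness in $L$): if some $j \in J\setminus\{h\}$ had a neighbour $m \in K$ outside $N(h)$, or a neighbour inside $L$, then $\{h, j\}$ or $\{h, j\} \cup \{m\}$ would have a non-clique neighbourhood that cannot be fixed without dropping $h$ or $j$ from the intersection with $X$, contradicting freeness.

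The main obstacle I anticipate is the necessity direction: one must choose, for a hypothetically free $X$ violating both alternatives, a single explicit subset $Y$ that is provably unrealizable, and the case analysis (which vertex plays the role of $h$, whether the offending neighbour of $J\setminus\{h\}$ lies in $K$ or in $L$, and whether it is already forced into every candidate feasible set) needs care — in particular one has to be sure that enlarging $Y$ outside $X$ cannot rescue feasibility, which is exactly where Corollary~\ref{Lem_i1i2} (uniqueness of the $I$-neighbour) does the decisive work.
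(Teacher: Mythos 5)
Your overall strategy --- establishing freeness directly from Proposition~\ref{Nfclique} rather than via the circuit classification of Proposition~\ref{prop_circuits}, which is the route the paper takes --- is viable in principle, but several of the concrete steps you propose are incorrect, and they are exactly the steps that carry the proof. In the sufficiency direction, the claim that ``if $h \in Y$ then $Y$ is feasible'' is false: take $K=\{k\}$, $I=\{h,i_1,i_2\}$ with all of $I$ adjacent to $k$, and $L=\{k\}$, $J=\{h\}$, $Y=\{k,h\}$; then $N(Y)=\{i_1,i_2\}$ is not a clique, and $Y$ is realizable only through the strictly larger feasible set $\{k,h,i_1,i_2\}$. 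The repair always requires adjoining vertices of $I\setminus J$; adjoining vertices of $K\setminus L$, as you propose in the first case, can never remove an offending pair of independent vertices from $N(F)$. Worse, when $h\notin Y$ the completion cascades: absorbing an $I$-vertex $i'$ into $F$ may pull a vertex $m\in K\setminus N(h)$ into $N(F)$ alongside $h$, forcing $m$ into $F$ as well, and so on; one must check that this closure never meets $X$ outside $Y$, which is precisely where both hypotheses on $h$ are needed (in effect one re-derives the $\fos/\ufs$ decomposition of Proposition~\ref{pAMposet0.1}).

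In the necessity direction the witness subsets you name are the wrong ones. For a rooted circuit the unique unrealizable trace is the singleton consisting of the \emph{root}, which here is always a vertex of $K$; the sets $\{\ell,h\}$ and $\{h,j\}$ you propose, which place both ends of the offending configuration \emph{inside} $F$, are in general realizable. For instance with $K=\{v\}$, $I=\{h,j\}$, $v\sim h$, $v\sim j$, $L=\{v\}$, $J=\{h,j\}$, the set $X$ is not free, yet the trace $\{h,j\}$ is achieved by the feasible set $\{h,j\}$ itself; the unachievable trace is $\{v\}$. Moreover $\{h,j\}\cup\{m\}$ is not even a subset of $X$ when $m\notin N(h)\supseteq L$. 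The subsets that actually fail are $Y=\{\ell\}$ for $\ell\in L$ adjacent to $h$ (when some $\ell'\in L\setminus N(h)$ exists, or some $j\in J\setminus\{h\}$ has a neighbour outside $N(h)$) and $Y=\{v\}$ when $v\in L$ is adjacent to two vertices of $J$: excluding $h$ from $F$ while keeping its neighbour $\ell$ forces $h\in N(F)$, and Corollary~\ref{Lem_i1i2} then propagates the contradiction. As written the argument does not go through; either fix the witnesses along these lines, or follow the paper's shortcut, which classifies the rooted circuits once and for all (Proposition~\ref{prop_circuits}) and then only has to decide which of $C_1$, $C_2$, $C_3$ can sit inside $L\cup J$.
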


\begin{proof}
Assuming first that $X$ is a free set in $(V,\FFF)$, we let $L = X \cap K$ and $J = X \cap I$ (we may have $L$ and/or $J$ empty).  If no edge of $V$ has an extremity in $L$ and the other one in $J$, then $L \cup J$ is as in the first case of the statement.  If there is some edge $\{l,h\}$ with $l \in L$ and $h \in J$, we show that $L$ and $J$ are as in the second case of the statement.  First, there holds $L \subseteq N(h)$ because otherwise for any vertex $u$ in $L \setminus N(h)$, we would find the circuit $\{h,l,u\}$ in $X$ (but a free set cannot contain any circuit).  Second, we prove $N(J \setminus\{h\}) \subseteq N(h) \setminus L$ again by contradiction.  Thus assume some vertex $v$ belongs to $N(J\setminus\{h\}) \setminus (N(h) \setminus L)$.  Then $v$ is adjacent to some $i$ in $J \setminus\{h\}$, and $v$ belongs to either $L$ or $K \setminus N(h)$.  In the first eventuality, $X$ contains the circuit 
$\{h,v,i\}$.  In the second eventuality, whether $l \sim i$ or $l \not\sim i$, the circuit $\{h,l,i\}$ is in $X$.  In both eventualities we reach a contradiction.  Thus $X = L \cup J$ is as in the second case of the statement.

Conversely, assume $L$ and $J$ are as in the statement and let us prove that $X = L \cup J$ contains no circuit, and so that $X$ is free.  If a rooted circuit $(\{i,j\},k)$ from $C_1$ (as in Proposition~\ref{prop_circuits}) is in $X$, our assumption imposes $i=h=j$, a contradiction.  If a rooted circuit $(\{i,l\},k)$ from $C_2$ is in $X$, then our assumption implies first $i=h$ because $l \in N(i) \cap L$, and then $k \notin L$ in contradiction with $k \in X$.  Finally, if a rooted circuit $(\{i,j\},k)$ from $C_3$ is in $X$ with $m$ as in $C_3$, our assumption implies $i=h$, but then $m \in N(k) \setminus N(h)$ is a contradiction with the assumption.  
\end{proof}

\section{Further work}

We studied the structure of split graph shelling antimatroids and described an algorithm to solve the maximum weight feasible set problem in polynomial time. The antimatroids considered form a very special class, but it seems that in general not much is known about the structure of chordal graph shelling antimatroids. We hope our paper will pave the way for further research on chordal graph shelling antimatroids. In particular, the complexity of finding maximum weight feasible sets on chordal graphs shelling antimatroids is an interesting problem.


\nocite{*}
\bibliographystyle{abbrvnat}
\bibliography{splitAM_biblio2}
\label{sec:biblio}

\end{document}